\algrenewcommand\alglinenumber[1]{\tiny #1:}
\newcommand{\reals}{\mathbb{R}}
\newcommand{\traj}{\tau}
\newcommand{\trj}{\gamma}
\newcommand{\mc}{M}
\newcommand{\flow}{\Phi}
\newcommand{\move}[1]{\stackrel{#1}{\rightarrow}}
\renewcommand{\path}{\rho}
\newcommand{\node}{\rho}
\newcommand{\expect}{\mathbb{E}}
\newcommand{\prob}{\textup{Pr}}
\def\bx{{\textbf{x}}}
\def\bv{{\textbf{v}}}
\newcommand{\ints}{\mathcal{I}}
\newcommand{\hyper}{\mathcal{H}}
\newcommand{\hcube}{\mathfrak{h}}
\newcommand{\states}{\Upsilon}
\newcommand{\edges}{\Rightarrow}
\newcommand{\pathmc}{\pi}
\newcommand{\state}{\eta}
\newcommand{\TRJ}{TRJ}
\newcommand{\F}{\mathbf{F}}
\newcommand{\U}{\mathbf{U}}
\newcommand{\G}{\mathbf{G}}
\newcommand{\citep}{\cite}
\renewcommand{\Re}{\mathbb{R}}
\newcommand{\vv}{\textbf{v}}
\newcommand{\vx}{\textbf{x}}
\newcommand{\ra}{\rightarrow}
\newcommand{\tm}{\mathbb{T}}
\newcommand{\Ptm}{\textbf{P}_{\mathbb{T}_j(\vv)}}
\newcommand{\tildex}{\raise.17ex\hbox{$\scriptstyle\mathtt{\sim}$}}
\def\bx{{\textbf{x}}}
\def\bv{{\textbf{v}}}
\def\ra{\rightarrow}
\newcounter{theass} \setcounter{theass}{1}
\newcounter{theeg} \setcounter{theeg}{1}
\newcommand{\INIT}{\textup{INIT}}
\newcommand{\paths}{\mathsf{paths}}
\begin{document}
% \conferenceinfo{HSB 2015, Madrid}

\title{Approximate probabilistic verification of \\ hybrid systems}

% alphabetical order

\author{Benjamin M. Gyori\inst{2}
\and Bing Liu\inst{3}
\and Soumya Paul\inst{4}
\and R. Ramanathan\inst{1}
\and \\
P.S. Thiagarajan\inst{1}}
\institute{Department of Computer Science, National University of Singapore, Singapore \and
Department of Systems Biology, Harvard Medical School, USA \and
Department of Computational and Systems Biology, University of Pittsburgh, USA \and
Institute de Recherche en Informatique de Toulouse, 31062 France}

\maketitle

\vspace{-20pt}

\begin{abstract}
Hybrid systems whose mode dynamics are governed by non-linear ordinary differential equations (ODEs) are often a natural model for biological processes. However such models are difficult to analyze. To address this, we develop a probabilistic analysis method by approximating the  mode transitions as stochastic events. We assume that  the probability of making a mode transition is proportional to the measure of the set of pairs of time points and value states  at which the mode transition is enabled. To ensure a sound mathematical basis, we impose  a natural continuity property on the non-linear ODEs. We also assume that the states of the system are observed at discrete time points but that the mode transitions may take place at any time between two successive discrete time points. This leads to a discrete time Markov chain as a probabilistic approximation of the hybrid system. We then show that for BLTL (bounded linear time temporal logic) specifications the hybrid system meets a specification iff its Markov chain approximation meets the same specification with probability $1$. Based on this, we formulate  a sequential hypothesis testing procedure  for verifying -approximately- that the Markov chain  meets a BLTL specification with high probability. Our case studies on cardiac cell dynamics and the circadian rhythm indicate that our  scheme can be applied in a number of realistic settings.
\end{abstract}

\vspace{-20pt}

\keywords{hybrid systems, Markov chains, dynamical systems, statistical model checking}\\

\vspace{-16pt}

\section{Introduction}\label{sec:intro}
Hybrid systems are often used to model biological processes \cite{batt2005qualitative,bruce2014modelling,buckwar2011exact}. The analysis of these models is difficult due to the high expressive power of the mixed dynamics \cite{henzinger-lics-survey}. Various lines of work have explored  ways to mitigate this problem with  a common technique being to restrict the mode dynamics \cite{girard2006efficient,frehse2005phaver,clarke2003verification,alur2000discrete,agrawal2006behavioural,henzinger1999discrete}.
However, for many of the models arising in systems biology the mode dynamics will be governed by a system of (non-linear) ordinary differential equations (ODEs).  To analyze such systems, we develop a scheme  under which such systems can be approximated as a discrete time Markov chain.

A key difficulty in analyzing a hybrid system's behavior is that the time points and value states at which a trajectory meets a guard will depend on the solutions to the ODE systems associated with the modes. For high-dimensional systems these solutions will not be available in closed form. To get around this,  we assume that the mode transitions are stochastic events and that the probability of a mode transition is proportional to the measure of the value state and time point pairs  at which this  transition is enabled. More sophisticated hypotheses could be considered. For instance one could tie the mode transition probability to how long the guard has been continuously enabled or how deeply within a guard region the current state is. To bring out the main ideas we will postpone exploring such approximations to our future work.

To secure a sound mathematical basis for our  approximation, we further assume: (i) The vector fields associated with the ODEs are $C^1$ (continuously differentiable) continuous functions.(ii) The states of the hybrid system are observable only at discrete time points. (iii) The set of initial states and the guard sets  are bounded open sets.(iv) The hybrid dynamics is strictly non-Zeno in the sense there is uniform upper bound on the number of transitions that can take place in a unit time interval. For technical convenience we in fact assume that  time discretization is so chosen that at most one mode transition takes place between two successive discrete time points.

Under these assumptions, we show that the dynamics of the hybrid system $H$ can be approximated as an infinite state Markov chain $M$. To relate the behaviors of $M$ and $H$, we use BLTL (bounded linear-time temporal logic~\cite{clarke1999model}) to specify time bounded dynamic properties of $H$. We then show that $H$ meets the specification $\psi$--i.e. every trajectory of $H$ is a model of $\psi$--iff $M$ meets the specification  $\psi$ with probability $1$. This allows us to approximately verify interesting properties of the hybrid system using its Markov chain approximation. However, even a bounded portion of $M$ can not be constructed effectively. This is because the transition probabilities of the Markov chain will depend on the solutions to the ODEs associated with the modes, which will not be available in a closed form. In addition, the structure of $M$ itself will be unknown since the states of the chain will be those that can reached with non-zero probability from the initial mode and we can not determine which transitions have non-zero probabilities. To cope with this, we design a statistical model checking procedure to approximately verify that the chain (and hence the hybrid system) almost certainly meets the specification. One just needs to ensure that the dynamics of the Markov chain is being sampled according to underlying probabilities. We achieve this  by randomly generating trajectories of $H$ through numerical simulations in a way that corresponds to randomly sampling the paths of the Markov chain according to its underlying structure and transition probabilities.

In establishing these results, we assume that the atomic propositions in the specification are interpreted over the modes of the hybrid system. Consequently  one can specify patterns of  mode visitations while quantitative properties can be inferred only indirectly and in a limited fashion. Our results however can be extended to handle quantitative atomic propositions (``the current concentration of protein X is greater than $2$ $\mu$M").  

To demonstrate the applicability of our method , we first study the electrical activity of cardiac cells represented by a hybrid model. By varying parameters we analyze key dynamical properties on multiple cell types, in healthy and disease conditions, and under different input stimuli. We also analyze a hybrid model of the circadian rhythm, and find distinct roles of multiple feedback loops in maintaining oscillatory properties of the dynamics. 

\subsection{Related work}
Mode transitions have been approximated as random events in the literature. In \cite{abate2005stochastic} the dynamics of a hybrid system is approximated by substituting the guards with probabilistic barrier functions.  Our transition probabilities are constructed using similar but simpler considerations. We have done so in order to be able to carry out temporal logic based verification based on simulations. An alternative approach to approximately verifying non-linear hybrid systems is one based on $\delta$-reals \cite{deltareach}. Here one verifies bounded reachability properties that are robust under small  perturbations of the numerical values mentioned in the specification. Since the approximation involved is of a very different kind, it is difficult to compare this line of work with ours. However, it may be fruitful to combine the two approaches to verify a richer set of reachability properties.

The present work may be viewed as an extension of~\cite{palaniappan2013} where a \emph{single} system of ODEs is considered. This  method however, breaks down in the multi-mode hybrid setting and one needs to construct--as we do here--an entirely new machinery. Finally, a wealth of literature is available on the analysis of stochastic automata~\cite{cassandras2010stochastic,blom2006stochastic,julius2009approximation,ballarini2011cosmos}. It will be interesting to explore if these methods can be transported  to our setting.

\section{Hybrid automata}

We fix $n$ real-valued  variables $\{x_i\}_{i=1}^n$ viewed as functions of time $x_i(t)$ with  $t \in \mathbb{R}_{+}$, the set of non-negative reals. A valuation of $\{x_i\}_{i=1}^n$  is  $\bv \in \mathbb{R}^n$ with $\bv(i) \in \mathbb{R}$ representing the value of $x_i$. The language of \emph{guards} is given by: (i) $a < x_i $ and $x_i < b$ are guards where $a, b$ are rationals and $i \in \{1, 2, \ldots, n\}$. (ii) If $g$ and $g'$ are guards then so are $g \wedge g'$ and $g \lor g'$.

$\mathcal{G}$  denotes the set of guards. We define $\bv \models g$ (i.e. $\bv$ satisfies the guard $g$)  via: $\bv \models  a < x_i $ iff $a < \bv(i)$ and similarly for $x_i < b$. The clauses for conjunction and disjunction are standard.   We let $\parallel\!\! g\!\! \parallel = \{ \bv \mid \bv \models g\}$. We note that $\parallel\!\! g \!\!\parallel$ is an open subset of $\Re^n$  for every guard $g$. We will abbreviate  $\parallel\!\! g\!\! \parallel$ as $g$.

\begin{definition}\label{def:SHA}
A hybrid automaton is a tuple
$H=(Q, q_{in},$ $ \{F_q(\bx)\}_{q\in Q}, \mathcal{G}, \ra, \INIT)$,
where
\begin{itemize}
\item $Q$ is a finite set of \emph{modes} and $q_{in} \in Q$ is the \emph{initial} mode.
\item For each $q\in Q$, $d\vx/dt = F_q(\vx)$ is a system of ODEs, where $\vx = (x_1, x_2, \ldots,x_n)$ and $F_q = (f^1_q(\vx), f^2_q(\vx),$ $\ldots, f^n_q(\vx))$. Further,  $f^i_q$ is a $C^1$  function for each $i$.
\item $\ra \subseteq (Q, \mathcal{G}, Q)$ is the mode transition relation.

\item $\INIT = (L_1, U_1) \times  (L_2, U_2) \ldots \times (L_n, U_n)$ is the set of initial states where $L_i < U_i$ and $L_i, U_i$ are rationals.
\end{itemize}
\end{definition}

We have not associated invariant conditions with the modes or reset conditions with the mode transitions. They can be introduced with some additional work.

Fixing a suitable unit time interval $\Delta$, we discretize the time domain as $t = 0, \Delta, 2 \Delta, \ldots$. We assume the states of the system are observed only at these discrete time points. Furthermore, we shall assume that  only a bounded number of mode changes can take place between successive discrete time points. Both in engineered and biological processes this is a reasonable assumption. Given this, we shall in fact assume that $\Delta$ is such that at most one mode change takes place within a $\Delta$ time interval. We note that there can be multiple choices for $\Delta$ that meet this requirement and in practice one must choose this parameter carefully. (Our method can be extended to handle a bounded number of mode transitions in a unit time interval but this will entail notational complications that will obscure the main ideas.) In what follows, for technical convenience we also assume the time scale has been normalized so that $\Delta = 1$. As a result, the discretized set of time points will be $\{0, 1, 2, \ldots\}$.

\subsection{Trajectories}

We have assumed that for every mode $q$, the right hand side of the ODEs, $F_q(\vx)$, is $C^1$ for each component. As a result, for each value $\vv \in \mathbb{R}^n$  and in each mode $q$, the system of ODEs $d\vx/dt = F_q(\vx)$ will have a unique solution  $Z_{q, \vv}(t)$ \cite{hirsch2012differential}. We are also guaranteed that $Z_{q, \vv}(t)$ is Lipschitz and hence measurable \cite{hirsch2012differential}. It will be convenient to work with two sets of functions derived from solutions to the ODE systems.

The (unit interval) \emph{flow} $\Phi_q : (0, 1) \times \mathbb{R}^n \ra \mathbb{R}^n$ is given by $\Phi_q(t, \vv) = Z_{q, \vv}(t)$. $\Phi_q$ will also be Lipschitz. Next we define the parametrized family of functions  $\Phi_{q, t} : \mathbb{R}^n \ra \mathbb{R}^n$ given by $\Phi_{q, t}(\vv) = \Phi_q(t, \vv)$. In addition to being Lipschitz, these functions will be bijective as well. Further, their inverses will also be bijective and Lipschitz.

A (finite) \emph{trajectory} is a sequence
$\tau = (q_0, \vv_0) \, (q_1, \vv_1) \, \ldots (q_k, \vv_k)$ such that for $0 \leq j < k$  the following conditions are satisfied: (i) For $0 \leq j < k$, $q_{j} \stackrel{g_j}{\ra} q_{j +1}$ for some guard $g_j$.
(ii) there exists $t \in (0, 1)$ such that $\Phi_{q_{j}, t}(\vv_{j}) \in g$. Furthermore  $\vv_{j + 1} = \Phi_{q_{j +1}, 1-t}(\Phi_{q_{j}, t}(\vv_{j}))$.

We say that the trajectory  $\tau$ as defined above \emph{starts} from $q_0$ and \emph{ends} in $q_k$. Further, its initial value state is $\vv_0$ and its final value state is $\vv_k$.
We let $TRJ$ denote the set of all finite trajectories that start from the initial mode $q_{in}$ and with an initial value state in $\INIT$.

\section{The Markov chain approximation}\label{sec:construction}
A (finite) path in $H$ is a sequence $\path=q_0q_1 \ldots q_k$ such that for $0 \leq j < k$, there exists a guard $g_j$ such that $q_{j} \stackrel{g_j}{\ra} q_{j + 1}$. We say that this path starts from $q_0$, ends at $q_k$ and is of length $k + 1$.  We let $\paths_H$ denote the set of all finite paths that start from $q_{in}$.

In what follows $\mu$  will denote the standard Lebesgue measure over finite dimensional Euclidean spaces. We will construct $M_H = (\Upsilon, \Rightarrow)$, the Markov chain approximation of $H$ inductively. Each  state in $\Upsilon$ will be of the form $(\rho, X, \textbf{P}_X)$ with $\rho \in \paths_H$, $X$ an open subset of  $\mathbb{R}^n$ of non-zero, finite measure and $\textbf{P}_X$ a probability distribution over $SA(X)$, the $\sigma$-algebra generated by $X$.

We start with $(q_{in}, \INIT, \textbf{P}_{\INIT}) \in \Upsilon$. Clearly,  $\INIT$ is an open set  of non-zero, finite measure  since $\mu(\INIT) = \prod_i (U_i - L_i)$. For technical convenience we shall assume $\textbf{P}_{\INIT}$ to be the uniform probability distribution. In other words, each member of $\INIT$ is an equally likely initial value state. However we can handle other distributions over $\INIT$ as well.
Assume inductively that $(\rho, X, \textbf{P}_X)$ is in $\Upsilon$ with $X$ an open subset of $\mathbb{R}^n$ of
non-zero, finite measure and $\textbf{P}_X$ a probability distribution over $SA(X)$. Suppose $\rho$ ends in $q$ and there are $m$ outgoing transitions $q\move{g_1}q_1,\ldots, q\move{g_m}q_m$ from $q$ in $H$ (Fig. \ref{fig:mc} illustrates this inductive step).

\begin{figure}[htb]
	\centering
	\begin{tikzpicture}[node distance=0.5cm and 0.5cm,shorten >=1pt,auto,>=stealth',
	mynode/.style={ellipse,draw,font=\normalsize,text width=2.3cm,minimum height=0.9cm,minimum width=1.5cm,align=center}, every node/.style={scale=0.7}]
	\node[mynode](init) at (0,0) {$(q_{in},\INIT,\textbf{P}_\INIT)$};
	\node[draw=none](u1) at (-4,-1.1) {};
	\node[draw=none](u2) at (-2,-1.1) {};
	\node[draw=none](d1) at (2,-1.1) {};
	\node[draw=none](d2) at (4,-1.1) {};
	\node[mynode](r) at (0,-1.1) {$(\rho,X,\textbf{P}_X)$};
	\node[mynode](ru1) at (-4,-2.2) {$(\rho q_1,X_1,\textbf{P}_{X_1})$};
	\node[draw=none](ru2) at (-2,-2.2) {$\ldots$};
	\node[mynode](rr) at (0,-2.2) {$(\rho q_j,X_j,\textbf{P}_{X_j})$};
	\node[draw=none](rd2) at (2,-2.2) {$\ldots$};
	\node[mynode](rd1) at (4,-2.2) {$(\rho q_m,X_m,\textbf{P}_{X_m})$};
	
	\draw[->,decorate,decoration={snake,amplitude=.4mm,segment length=2mm,post length=1mm}](init) to node[right]{$\rho$} (r);
	\draw[->,dashed] (init) to (u1);
	\draw[->,dashed] (init) to (u2);
	\draw[->,dashed] (init) to (d1);
	\draw[->,dashed] (init) to (d2);
	\draw[->] (r) to (ru1);
	\draw[->,dashed] (r) to (ru2);
	\draw[->,dashed] (r) to (rr);
	\draw[->] (r) to node[right,rotate=60, xshift=-13pt]{\Huge{$\times$}} (rd1);
	\draw[->,dashed] (r) to (rd2);
	\end{tikzpicture}
	\caption{The Markov chain construction. The edge from the state $(\rho,X,\textbf{P}_X)$ to the state $(\rho q_m,X_m,\textbf{P}_{X_m})$ marked with a `{\large $\times$}' represents the case where $X_m$ has measure 0, and hence the probability of this transition is 0. Thus, $(\rho q_m,X_m,\textbf{P}_{X_m})$  will not be a state of the Markov chain.}
	\label{fig:mc}
\end{figure}
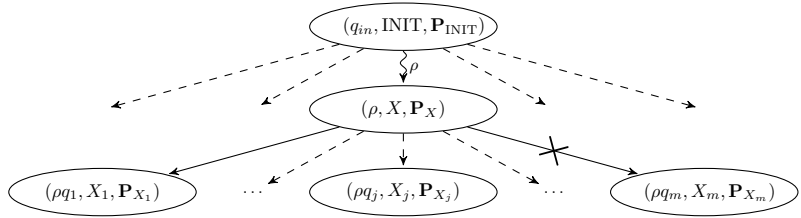

Then for $1 \leq j \leq m$ we define the triples $(\rho q_j, X_j, \textbf{P}_{X_j})$ as follows. In doing so we will assume the required properties
of the objects involved in this construction. We will then establish these properties and thus the soundness of the construction. For convenience, through the remaining parts of this section $j$ will range over $\{1, 2, \ldots, m\}$.

For each $\vv \in X$ and each $j$ we first define the set of time points $\tm_j(\vv) \subseteq (0, 1)$ via
\begin{equation}
\tm_j(\vv) = \{t \ | \Phi_q(t,\vv) \in g_j\}.
\end{equation}
Thus $\tm_j(\vv)$ is the set of time points in $(0, 1)$ at which the guard $g_j$ is satisfied if the system starts from $\vv$ in mode $q$ at time $k$ and evolves according to dynamics of mode $q$ up to time $k + t$.
We next define $X_j$ for each $j$ as
\begin{equation}
%X_j = \bigcup_{\vv \in X} \{\widehat\Phi_{q_j, 1-t}(\widehat\Phi_{q, t}(\vv)) \mid t \in \tm_j(\vv)\}.
X_j = \bigcup_{\vv \in X} \{\Phi_{q_j}(1-t,\Phi_{q}(t,\vv)) \mid t \in \tm_j(\vv)\}.
\end{equation}
Thus $X_j$ is the set of all value states obtained by starting from some $\vv \in X$ at time $k$, evolving up to $k + t$ according to the dynamics $q$, making an instantaneous mode switch to $q_j$ at this time point, and evolving up to time $k + 1$ according to dynamics of mode $q_j$.

To complete the definition of the triples $(\rho q_j, X_j, \textbf{P}_{X_j})$, we first denote by $\Ptm$ the uniform probability distribution over $\tm_j(\vv)$. Our construction can be easily extended to handle other kinds of distributions as well.  We now define the probability distributions $\textbf{P}_{X_j}$ over $SA(X_j)$ as follows.
Suppose $Y$ is a measurable subset of $X_j$. Then
\begin{equation}\label{eqn:measure}
\textbf{P}_{X_j}(Y) = \int_{\vv\in X}\int_{t\in \tm_j(\vv)}{\bf 1}_{(\flow_{q_j}(1-t,\flow_q(t,\vv))\cap Y)} d\Ptm d\textbf{P}_{X}.
\end{equation}

As usual ${\bf 1}_{Z}$ is the indicator function of the set $Z$ while $d\Ptm$ indicates that the inner integration over $\tm_j(\vv)$ is  w.r.t. the (uniform) probability measure $\Ptm$ and $d\textbf{P}_X$ indicates that the outer integration over $X$ is w.r.t. the probability measure $\textbf{P}_X$. Thus $\textbf{P}_{X_j}(Y)$ captures the
probability that the value state $\Phi_{q_j}(1-t,\Phi_{q}(t,\vv))$ lands in $Y \subseteq X_j$ by taking the transition $q \stackrel{g_j}{\ra} q_j$ at some time point in $\tm_j(\vv)$ given that one started with some value state in $X$.

Next we define the triples $((\rho, X, \textbf{P}_X), p_j,  (\rho q_j, X_j,\textbf{P}_{X_j}))$, where $p_j$ is given by

\begin{equation}\label{eqnprob}
p_j = \int_{\vv\in X} \frac{\mu(\tm_j(\vv))}{\sum_{\ell=1}^m
\mu(\tm_{\ell}(\vv))}d\textbf{P}_X.
\end{equation}

Thus $p_j$ captures the probability of taking the mode transition $q \move{g_j} q_j$ when starting from the value states in $X$ and mode $q$. For every $j$ we add the state $(\rho q_j, X_j, \textbf{P}_{X_j})$ to $\Upsilon$ and the triple $((\rho, X, \textbf{P}_X),  p_j,  (\rho q_j, X_j, \textbf{P}_{X_j}))$ to $\Rightarrow$ iff $\mu(X_j) > 0$.

Finally, $(q_{in}, \INIT, \textbf{P}_{\INIT})$ is the initial state of $M_H$.
We can summarize the key properties of our construction as follows (while assuming the associated terminology and notations).

\begin{restatable}[]{theorem}{thmmc}
\label{thm:MCprop}
\begin{enumerate}
\item $\tm_{j}(\vv)$ is an open set of finite measure for each $\vv \in X$ and each $j$.
\item $X_j$ is open and is of finite measure for each $j$.
\item If $(\rho q_j, X_j, \textbf{P}_{X_j}) \in \Upsilon$ then $\mu(X_j) > 0$.
\item $\textbf{P}_{X_j}$ is a probability distribution for each $j$.
\item $M_H = (\Upsilon, \Rightarrow)$ is an infinite state Markov chain whose underlying graph is a finitely
branching tree.
\end{enumerate}

\begin{proof}
	To prove the first part, suppose $t\in \tm_{j}(\vv)$. Then $\flow_q(t,\vv) = \vv' \in g_j$ and $g_j$ is open. Hence $\vv'$ will be contained in an open neighborhood $U$ contained in $g_j$. Since $\flow_q$ is Lipschitz we can pick $U$
	such that $Y' = \flow_q^{-1}(U)$ is an open set containing $(\vv, t)$ with $Y' \subseteq (0, 1) \times X$.  Thus every element of $\tm_{j}(\vv)$ is contained in an open neighborhood in $(0, 1)$ and hence $\tm_{j}(\vv)$ is open.
	
	Using the definition of $X_j$, the fact that $X$ and $\tm_j(\vv)$ are open, and the continuity of the inverses of the flow functions it is easy to observe that  $X_j$ is open. To see that it is of finite measure, by the induction hypothesis, $X$ is open and $\mu(X)$ is finite. Hence $((0,1) \times X)$ is open as well and $\mu((0,1)\times X)$ is finite. Since $\reals^{n+1}$ is second-countable \cite{stephen1970general}, there exists a countable family of disjoint open-intervals $\{I_i\}_{i\geq 1}$ in $\reals^{n+1}$ such that $((0,1)\times X)=\bigcup_i I_i$. Clearly each $I_i$ has a finite measure. By the Lipschitz continuity of $\Phi_q$ we know that there exists a constant $c$ such that $\mu(\Phi_q(I_i)) < c\cdot \mu(I_i)$ for all $i$. We thus have
	\begin{align}
		\mu(\Phi_q((0,1), X)) &\leq \sum_i\mu(\Phi_q(I_i)) \nonumber \\
		&< c\sum_i\mu(I_i) = c\mu((0,1)\times X) < \infty.
	\end{align}
	Therefore  $\Phi_q((0,1), X)$ has a finite measure. By a similar argument we can show that $\Phi_{q_j}((0,1),\Phi_q((0,1), X))$ has a finite measure as well. Since $X_j = \bigcup_t \Phi_{q_j, 1 - t}(\Phi_{q, t}(X) \cap g)\subseteq \Phi_{q_j}((0,1),\Phi_q((0,1), X))$, it must have a finite measure.
	
	The remaining parts of the theorem  follow easily from the definitions and basic measure theory.
\end{proof}
\end{restatable}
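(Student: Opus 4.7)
The plan is to work through the five parts in order, leaning on the $C^1$/Lipschitz assumptions and standard measure-theoretic facts, and isolating the one genuinely non-trivial estimate (the finite-measure claim for $X_j$).

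For part (1), I would simply observe that $\tm_j(\vv) = \Phi_q(\cdot,\vv)^{-1}(g_j) \cap (0,1)$ is the preimage of an open set under a continuous map intersected with an open interval, hence open; finiteness of $\mu(\tm_j(\vv))$ is immediate since $\tm_j(\vv) \subseteq (0,1)$. For part (2), openness of $X_j$ would follow from a local homeomorphism argument: given $\vw = \Phi_{q_j}(1-t_0,\Phi_q(t_0,\vv_0)) \in X_j$ with $\vv_0 \in X$ and $t_0 \in \tm_j(\vv_0)$, the continuity of $\Phi_q$ and openness of $g_j$ produce an open neighborhood $V \times I$ of $(\vv_0,t_0)$ in $X \times (0,1)$ on which $\Phi_q(t,\vv) \in g_j$. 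For any fixed $t \in I$, the composition $\Phi_{q_j,1-t} \circ \Phi_{q,t}$ is bijective, Lipschitz, and has Lipschitz inverse (as noted in the excerpt), hence a homeomorphism, so it sends the open set $V$ to an open subset of $X_j$ containing $\vw$.

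The finite-measure part of (2) is the main technical step. Here I would use the fact that a Lipschitz map $f : \mathbb{R}^{n+1} \supseteq U \to \mathbb{R}^n$ satisfies $\mu(f(A)) \leq c \cdot \mu(A)$ for $A \subseteq U$, with $c$ depending on the Lipschitz constant and the dimensions. By the inductive hypothesis $\mu(X)$ is finite, so $\mu((0,1) \times X) = \mu(X)$ is finite, and a countable cover of $(0,1) \times X$ by open boxes (using second countability of $\mathbb{R}^{n+1}$) together with the above Lipschitz volume bound gives $\mu(\Phi_q((0,1) \times X)) < \infty$. Iterating the same argument for $\Phi_{q_j}$ yields $\mu(X_j) < \infty$, since $X_j$ is contained in the second iterated image. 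Part (3) is essentially by construction: the triple $(\rho q_j, X_j, \textbf{P}_{X_j})$ is added to $\Upsilon$ only when $\mu(X_j) > 0$, so the implication is tautological.

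For part (4), I would verify normalization and countable additivity of $\textbf{P}_{X_j}$ from equation (\ref{eqn:measure}). Normalization is: taking $Y = X_j$ makes ${\bf 1}_{(\Phi_{q_j}(1-t,\Phi_q(t,\vv)) \cap X_j)} \equiv 1$ on the domain of integration, so the inner integral against the uniform $\Ptm$ yields $1$ and the outer integral against $\textbf{P}_X$ yields $1$. Countable additivity over a disjoint sequence $Y_1,Y_2,\ldots$ follows from monotone convergence applied to the partial sums of indicator functions, together with Tonelli to exchange the integrals. Finally, for part (5), each state carries a distinct path $\rho \in \paths_H$ as its first coordinate and the path uniquely determines the parent (strip the last mode), so the underlying graph is a tree rooted at $(q_{in},\INIT,\textbf{P}_{\INIT})$; finite branching follows from $|Q|$ being finite, bounding the out-degree by $m \leq |Q|$; the tree is infinite because $H$ is nontrivial and trajectories can be extended as long as at least one $\mu(X_j) > 0$ downstream. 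I expect the Lipschitz-volume covering in part (2) to be the main obstacle, since everything else reduces to routine continuity, Fubini, or bookkeeping once the right framing is fixed.
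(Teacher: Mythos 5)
Your proposal follows essentially the same route as the paper's proof: openness of $\tm_j(\vv)$ and $X_j$ via continuity of the flows and their inverses, finiteness of $\mu(X_j)$ via a countable cover of $(0,1)\times X$ by open boxes combined with the Lipschitz volume bound $\mu(\Phi_q(I_i)) \le c\,\mu(I_i)$ applied twice, and the remaining parts dispatched by construction and routine measure theory. The only difference is that you spell out the normalization/countable-additivity check for $\textbf{P}_{X_j}$ and the tree-structure argument, which the paper leaves as "follows easily from the definitions."
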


\section{Relating the behaviors of {\large{\bf $H$}} and {\large{\bf $M_H$}}}\label{pbltl}
We shall use bounded linear-time temporal logic (BLTL)~\cite{clarke1999model} to specify time bounded properties and use it to relate the behaviors of $H$ and $M_H$. For convenience we shall write $M$ instead of $M_H$ from now on.

We assume a finite set of atomic propositions $AP$ and a valuation function $Kr: Q \ra 2^{AP}$. Formulas of BLTL are defined as: (i) Every atomic proposition as well as the constants $\emph{true}$, $\emph{false}$ are formulas. (ii) If $\psi$, $\psi'$ are formulas then $\lnot \psi$ and $\psi \vee \psi'$  are formulas. (iii) If $\psi$, $\psi'$ are formulas and $\ell$ is a positive integer  then $\psi \U^{\leq \ell}  \psi'$ is a  formula. The derived operators $\F^{\leq\ell}$ and $\G^{\leq \ell}$ are defined as usual: $\F^{\leq\ell}\psi \equiv \emph{true}\U^{\leq\ell}\psi$ and $\G^{\leq\ell}\psi \equiv \neg \F^{\leq\ell}\neg\psi$.

We shall assume through the rest of the paper that the behavior of the system is of interest only up to a maximum time point $K > 0$. This is guided by the fact that given a BLTL formula $\psi$ there is a constant $K_{\psi}$ that depends only on $\psi$ so that it is enough to evaluate an execution trace of length at most $K_{\psi}$ to determine whether $\psi$  is satisfied \cite{TACAS99}. Hence we assume that a sufficiently high $K$ has been chosen to handle the specifications of interest. Having fixed $K$, we denote by $TRJ^{K+1}$ the trajectories of length $K + 1$, and view this set as representing the time bounded non-deterministic behavior of $H$ of interest.

To develop the corresponding notion for $M$, we first define a finite path in $M$ to be  a sequence $\eta_0 \eta_1 \ldots \eta_k$ such that $\eta_j \in \Upsilon$ for $0 \leq j \leq k$. Furthermore for $0 \leq j < k$ there exists $p_j \in (0, 1]$ such that $\eta_j \stackrel{p_j}{\Rightarrow} \eta_{j + 1}$.  Such a path is said to start from $\eta_0$ and its length is $k + 1$. We define $\paths_M$ to be the set of finite paths that start from the initial state of $M$ while $\paths^{K + 1}_M$ is the set of paths in $\paths_M$ of length $K + 1$.

\paragraph*{The trajectory semantics}
Let $\tau = (q_0, \vv_0)$ $(q_1, \vv_1)$ $\ldots$  $(q_k, \vv_k)$ be a finite trajectory, $\psi$ a BLTL formula and $0\leq j \leq K$. Then $\tau, j \models_H \psi$  is defined via:
\begin{itemize}
\item $\tau, j \models_H  A$ iff $A \in Kr(q_j)$, where $A$ is an atomic proposition.
\item $\lnot$ and  $\vee$ are interpreted in the usual way.
\item $\tau,j \models_H \psi \U^{\leq {\ell}} \psi'$ iff there exists $j'$ such that $j' \leq \ell$ and $j + j' \leq k$ and $\tau, (j + j') \models_H\psi'$. Further, $\tau,(j + j'') \models_H \psi$ for every $0 \leq j'' < j'$.
\end{itemize}

We now define $models_H(\psi) \subseteq TRJ^{K + 1}$ via: $\tau \in models_H(\psi)$ iff $\tau, 0 \models_H \psi$. We say that $H$ \emph{meets the specification} $\psi$ -denoted $H \models \psi$- iff $models_H(\psi) = TRJ^{K + 1}$.

\paragraph*{The Markov chain semantics}
Let $\pi= \eta_0 \eta_1 \ldots \eta_k$ be a path in $M$ with $\eta_j = (\rho q_j, X_j, \textbf{P}_{X_j})$ for $0 \leq j \leq k$.
Let $\psi$ be a BLTL formula and $0 \leq j \leq k$.  Then  $\pi, j \models_M \psi$ is given by:
\begin{itemize}
\item $\pi, j \models_M  A$ iff $A \in Kr(q_j)$, where $A$ is an atomic proposition.
\item The remaining clauses are defined just as in the case of $\models_H$. %\bigvee
\end{itemize}

Now we define $models_M(\psi) \subseteq \paths^{K + 1}_M$ via: $\pi \in models_M(\psi)$ iff $\pi, 0 \models_M \psi$.
We can now define the probability of satisfaction of a formula in $M$. Let $\pi = \eta_0 \eta_1 \ldots \eta_K$ be in $\paths^{K + 1}_M$. Then $\Pr(\pi) = \prod_{0 \leq \ell < K} p_{\ell}$,
where $\eta_{\ell} \stackrel{p_{\ell}}{\Rightarrow} \eta_{\ell + 1}$ for $0 \leq \ell < K$. This leads to
\begin{equation*}
\Pr(models_M(\psi)) = \sum_{\pi \in models_M(\psi)} \Pr(\pi).
\end{equation*}
We write $\mc \models \psi$ to denote $\Pr(models_M(\psi)) = 1$

For $p \in [0, 1]$ we write as usual $\Pr_{\geq p}(\psi)$ instead of \\
$\Pr(models_M(\psi))$ $ \geq p$.
We note that $Pr(\pi) > 0$ for every $\pi \in models_M(\psi)$. Furthermore $\sum_{\pi \in models_M(\psi)} \Pr(\pi) \leq 1$. Hence $\Pr_{\geq 1}(\psi)$ iff $ models_M(\psi) = \paths^{K + 1}_M$ iff $M \models \psi$.
\subsection{The correspondence result}
We wish to show that $H$ meets the specification $\psi$ iff $\Pr_{\geq 1}(\psi)$. To this end let $\pi= \eta_0 \eta_1 \ldots \eta_k$ be a path in $M$ with $\eta_j = (q_0q_1\ldots q_j, X_j, \textbf{P}_{X_j})$ for $0 \leq j \leq k$ and let $\tau = (q'_0, \vv_0)$ $(q'_1, \vv_1) \ldots(q'_{k'}, \vv_{k'})$ be a trajectory. Then we say that $\pi$ and $\tau$ are \emph{compatible} iff $k = k'$ and $q_j = q'_j$ and $\vv_j \in X_j$ for $0 \leq j \leq k$. The following three observations based on this notion will easily lead to the main result.

\begin{restatable}[]{lemma}{lempbltl}
\label{lemmapbltl}
\begin{enumerate}
\item Suppose the path $\pi= \eta_0 \eta_1 \ldots \eta_k$ in $M$ and the trajectory $\tau = (q_0, \vv_0)$ $(q_1, \vv_1) \ldots(q_k, \vv_k)$ are compatible. Let $0 \leq j \leq k$ and $\psi$ be a BLTL formula. Then $\pi, j \models_M \psi$ iff $\tau, j \models_H \psi$.
\item Suppose $\pi$ is a path in M. Then there exists a trajectory $\tau$  such that $\pi$ and $\tau$ are compatible. Furthermore if $\pi \in \paths_M$ then $\tau \in TRJ$.
\item Suppose $\tau$ is a trajectory. Then there exists a path $\pi$ in $M$ such that $\tau$ and $\pi$ are compatible. Furthermore if $\tau \in TRJ$ then $\pi \in \paths_M$.
\end{enumerate}

\begin{proof}
	To prove the first part we note that if $A$ is an atomic proposition then $\pi, j \models_M A$ iff $A \in Kr(q_j)$ iff $\tau, j \models_H A$. We next note that the suffix of length $m$ of $\pi$ will be compatible with the suffix of length $m$ of $\tau$ whenever $\pi$ and $\tau$ are compatible. The result now follows at once by structural induction on $\psi$.
	
	To show the second part let $\pi = \eta_0 \eta_1 \ldots \eta_k$ be a path in $M$ with $\eta_j = (q_0q_1\ldots q_j, X_j, \textbf{P}_{X_j})$ for $0 \leq j \leq k$. Clearly $X_j$ is non-empty for $0 \leq j \leq k$ since $\eta_j \in \Upsilon$ implies $\mu(X_j) > 0$.
	We proceed by induction on $k$. If $k = 0$ then we can pick $\vv_0 \in X_0$ and the trajectory $(q_0, \vv_0)$ will be compatible with $\tau$. So assume $k > 0$. Then by the induction hypothesis there exists a trajectory $(q_1, \vv_1) (q_2, \vv_2) \ldots (q_k, \vv_k)$ which is compatible with the path $\eta_1 \eta_2 \ldots \eta_k$. Let $q_0 \stackrel{g}{\ra} q_1$. Since $\vv_1 \in X_1$ there must exist $\vv_0$ in $X_0$ and $t \in (0, 1)$ such that $\Phi_{q_0, t}(\vv_0) \in g$ and $\vv_1 = \Phi_{q_1, 1-t}(\Phi_{q_0, t}(\vv_0))$. Clearly $\vv_0 \vv_1 \ldots \vv_k$ is a trajectory that is compatible with $\pi$.
	The fact that $\tau \in TRJ$ if $\pi \in \paths_M$ follows from the definition of compatibility.
	
	To prove the third part let $\tau = (q_0, \vv_0)$ $(q_1, \vv_1) \ldots(q_k, \vv_k) \in TRJ$. Again we proceed by induction
	on $k$. Suppose $k = 0$. Then $(q_{in}, \INIT, \textbf{P}_{\INIT})$ is in $\paths_M$ which is compatible with $\tau$. So suppose $k > 0$. Then by the induction hypothesis there exits $\pi' = \eta_0 \eta_1 \ldots \eta_{k - 1}$ such that $\pi'$ is compatible with $\tau' = (q_0, \vv_0) (q_1, \vv_1) \ldots (q_{k - 1}, \vv_{k -1})$. Let $q_{k -1} \stackrel{g}{\ra} q_k$. Since $X_{k -1}$ is open there exists an open neighborhood  $Y \subseteq X_{k -1}$ that contains $\vv_{k - 1}$. But then both $\Phi^{-1}_{q_{k -1}}$ and $\Phi^{-1}_{q_k}$ are continuous bijections. Thus $\Phi_{q_{k-1},t}(Y)$ is open and $\Phi_{q_{k-1},t}(Y)\cap g$ should be open and non-empty (since $g$ is open and $(q_k,\vv_k)$ is part of the trajectory). Hence $Y'=\bigcup_{t \in (0, 1)} \Phi_{q_{k}, 1 - t}(\Phi_{q_{k - 1},t}(Y)\cap g)$ is a non-empty open set with a positive measure.
	Hence there will be a state of the form $\eta_k = (\rho_k, X_k, \textbf{P}_{X_k})$ in $\Upsilon$ with $Y' \subseteq X_k$ and $\eta_{k - 1} \stackrel{p}{\Rightarrow} \eta_k$ for some $p \in (0, 1]$. Clearly $\pi = \pi' \eta_k \in \paths_M$  and is compatible with $\tau$. Again the fact that $\pi \in \paths_M$ if $\tau \in TRJ$ follows from the definition of compatibility.
\end{proof}
\end{restatable}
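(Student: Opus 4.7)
The plan is to prove the three parts in order. Part 1 is essentially syntactic once compatibility is set up correctly; parts 2 and 3 are the ``round-trip'' constructions relating trajectories to Markov-chain paths, and they share a common induction template on the length $k+1$.

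For part 1, I would proceed by structural induction on $\psi$. The atomic case is immediate: compatibility forces the modes at each position to coincide, so $Kr(q_j)$ is the same in both semantics and the atomic clauses match. The Boolean clauses go through unchanged. For $\psi \U^{\leq \ell} \psi'$, the crucial observation is that if $\pi$ and $\tau$ are compatible, then for every $j \leq k$ the length-$(k-j)$ suffixes of $\pi$ and $\tau$ are still compatible; hence the clause reduces position-by-position to the inductive hypothesis applied at the indices $j+j''$ and $j+j'$ that appear in the until.

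For part 2, I would induct on $k$. The base case $k=0$ is handled by picking any $\vv_0 \in X_0$, which is non-empty because $\eta_0 \in \Upsilon$ forces $\mu(X_0) > 0$. For the inductive step, apply the hypothesis to the subpath $\eta_1 \ldots \eta_k$ (viewed as a path of $M$ from $\eta_1$) to obtain a compatible trajectory $(q_1,\vv_1)\ldots(q_k,\vv_k)$. Let $q_0 \stackrel{g}{\ra} q_1$ be the transition witnessing $\eta_0 \Rightarrow \eta_1$. By the definition of $X_1$ in equation (2) of the construction, the membership $\vv_1 \in X_1$ directly produces some $\vv_0 \in X_0$ and $t \in (0,1)$ with $\Phi_{q_0,t}(\vv_0) \in g$ and $\vv_1 = \Phi_{q_1,1-t}(\Phi_{q_0,t}(\vv_0))$, so prepending $(q_0,\vv_0)$ yields the required compatible trajectory. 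If moreover $\pi \in \paths_M$, then $\eta_0$ is the initial state, so $q_0 = q_{in}$ and $\vv_0 \in \INIT$; hence $\tau \in TRJ$.

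For part 3, I would again induct on $k$. The base case is trivial because $(q_{in},\INIT,\textbf{P}_\INIT)$ is compatible with $(q_0,\vv_0)$ when $\tau \in TRJ$. For the inductive step, apply the hypothesis to $\tau' = (q_0,\vv_0)\ldots(q_{k-1},\vv_{k-1})$ to obtain a compatible path $\pi' = \eta_0 \ldots \eta_{k-1}$, and let $q_{k-1} \stackrel{g}{\ra} q_k$. The main obstacle is to exhibit a state $\eta_k \in \Upsilon$ with $\vv_k \in X_k$ and a transition $\eta_{k-1} \stackrel{p}{\Rightarrow} \eta_k$ for some $p > 0$; since the construction discards any candidate target with measure-zero $X$-component, I must produce a positive-measure open region around $\vv_k$. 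Using part 2 of Theorem \ref{thm:MCprop}, $X_{k-1}$ is open, so pick an open neighborhood $Y \subseteq X_{k-1}$ of $\vv_{k-1}$. Because $g$ is open and the flow maps $\Phi_{q_{k-1},t}$ and $\Phi_{q_k,1-t}$ are continuous bijections with continuous inverses, the set
\[
Y' \;=\; \bigcup_{t \in (0,1)} \Phi_{q_k,1-t}\bigl(\Phi_{q_{k-1},t}(Y) \cap g\bigr)
\]
is open, non-empty (it contains $\vv_k$), and hence of positive measure. Therefore the construction places some $\eta_k = (\rho q_k, X_k, \textbf{P}_{X_k}) \in \Upsilon$ with $Y' \subseteq X_k$ and keeps the edge from $\eta_{k-1}$, giving the required compatible extension $\pi = \pi'\eta_k$; the fact that $\pi \in \paths_M$ follows since $\pi'$ already starts at the initial state. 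The hard part throughout is this positive-measure argument in part 3; parts 1 and 2 are essentially bookkeeping once compatibility is unwound.
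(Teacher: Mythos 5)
Your proposal is correct and follows essentially the same route as the paper's own proof: structural induction with the suffix-compatibility observation for part 1, induction on $k$ using the definition of $X_1$ to pull back $\vv_1$ for part 2, and for part 3 the same construction of the open set $Y'=\bigcup_{t}\Phi_{q_k,1-t}(\Phi_{q_{k-1},t}(Y)\cap g)$ to guarantee a positive-measure target state survives the construction. No gaps; the argument matches the paper step for step.
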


\begin{restatable}[]{theorem}{thmmain}
\label{thm:main}
$H \models \psi$ iff $M \models \psi$.
\begin{proof}
Suppose $H$ does not meet the specification $\psi$. Then there exists $\tau \in TRJ^{K+1}$ such
that $\tau, 0 \not\models_H \psi$. By the third part of Lemma~\ref{lemmapbltl} there exists $\pi \in \paths^{K+ 1}_M$
which is compatible with $\tau$. By the first part of Lemma~\ref{lemmapbltl} we then have $\pi \notin models_M(\psi)$ which leads to $Pr_{< 1}(\psi)$.

Next suppose that $Pr_{< 1}(\psi)$. Then there exists $\pi \in \paths^{K + 1}$ such that $\pi, 0 \not\models_M \psi$.
By the second part of Lemma~\ref{lemmapbltl} there exists $\tau \in TRJ^{K + 1}$ which is compatible with $\pi$. By the first part of Lemma~\ref{lemmapbltl} this implies $\tau, 0 \not\models_H \psi$ and this in turn implies that $H$ does not meet the specification $\psi$. \qed
\end{proof}
\end{restatable}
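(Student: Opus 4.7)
The plan is to prove both directions by contraposition, leaning entirely on Lemma~\ref{lemmapbltl}, which packages the real work. Its three parts give (i) that compatible $(\pi,\tau)$ pairs satisfy the same BLTL formulas at every position, (ii) that every Markov-chain path has a compatible trajectory (with the right endpoint conditions), and (iii) the symmetric converse. With these in hand, the theorem becomes a straightforward bookkeeping argument of the form ``a counterexample on one side transports to a counterexample on the other.''

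For the direction $H\models\psi \Rightarrow M\models\psi$, I would assume $H\not\models\psi$, obtain some $\tau\in TRJ^{K+1}$ with $\tau,0\not\models_H\psi$, apply part~3 of Lemma~\ref{lemmapbltl} to get a compatible $\pi\in\paths^{K+1}_M$, and then apply part~1 to conclude $\pi\notin models_M(\psi)$. To finish, I would invoke the observation already made in the paper that, since every edge of $\Rightarrow$ carries strictly positive probability by construction, $\Pr_{\geq 1}(\psi)$ is equivalent to the set-level identity $models_M(\psi)=\paths^{K+1}_M$; the existence of the offending $\pi$ therefore rules this out and gives $M\not\models\psi$.

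The converse direction is symmetric. Assuming $M\not\models\psi$, the same equivalence produces some $\pi\in\paths^{K+1}_M$ with $\pi,0\not\models_M\psi$; part~2 of Lemma~\ref{lemmapbltl} delivers a compatible $\tau\in TRJ^{K+1}$, and part~1 transports the violation to $\tau$, yielding $H\not\models\psi$.

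The only conceptual subtlety, which is not in the lemma but in the theorem itself, is the step that converts a probabilistic statement ($\Pr_{<1}(\psi)$) into an honest combinatorial counterexample ($\exists\,\pi\in\paths^{K+1}_M\setminus models_M(\psi)$); this is precisely where one uses that $\Pr(\pi)>0$ for every $\pi\in\paths^{K+1}_M$, so there are no ``measure-zero escape routes.'' All the genuine analytic effort---openness of $\tm_j(\vv)$ and $X_j$, Lipschitz/continuity arguments, and the preimage construction that extends a trajectory backwards into a Markov-chain state---lives in Theorem~\ref{thm:MCprop} and Lemma~\ref{lemmapbltl}, so the present theorem needs nothing more than these two ingredients and a careful reading of the definitions.
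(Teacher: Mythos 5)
Your proposal is correct and follows essentially the same route as the paper: contraposition in both directions, transporting counterexamples via parts 2 and 3 of Lemma~\ref{lemmapbltl} and preserving (non-)satisfaction via part 1, together with the already-established equivalence $\Pr_{\geq 1}(\psi)$ iff $models_M(\psi)=\paths^{K+1}_M$ (which rests on every path having strictly positive probability). Your explicit flagging of that last equivalence as the only probabilistic-to-combinatorial step is a fair and accurate reading of where the paper's argument actually uses positivity of the transition probabilities.
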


\section{The SMC procedure}\label{sec:smc}
To verify whether $H$ meets the specification $\psi$, we solve the equivalent problem whether $\prob_{\ge 1}(\psi)$ on $M$.
However as discussed in Section \ref{sec:intro}, $M$ cannot be constructed explicitly since both its structure and transition probabilities, defined in terms of the solutions to the ODEs, will not be available.
Therefore we shall use randomly generated trajectories to sample the paths of $M$ and formulate a sequential hypothesis test to decide with bounded error rate whether $\prob_{\ge 1}(\psi)$ holds.
Algorithm \ref{algtrajsim} describes our trajectory sampling procedure.
\vspace{-0.2in}
\begin{algorithm}[H]
\scriptsize
  \renewcommand{\thealgorithm}{1}
	\caption{Trajectory simulation}
	\label{algtrajsim}
	Input: Hybrid automaton $H=(Q, q_{in}, \{F_q(\bx)\}_{q\in Q}, \mathcal{G}, \ra, \INIT)$, maximum time step $K$.
	
	Output: Trajectory $\tau$
	\begin{algorithmic}[1]
		\State Sample $\vv_0$ from $\INIT$ uniformly, set $q_0:=q_{in}$ and $\tau:=(q_0,\vv_0)$.
		\For{$k:=1 \ldots K$}
		\State Generate time points  $T:=\{t_1,\ldots,t_J\}$ uniformly in $(0,1)$.
		\State Simulate $\vv^j := \Phi_{q_{k-1}}(t_j,\vv_{k-1})$, for $j\in\{1,\ldots,J\}$
		\State Let $\widehat{\tm}_j:=\{t \in T: \vv^j \in g_j\}$ be the time points where $g_j$ is enabled.
		\State Pick $g_{\ell}$ randomly according to probabilities $\{p_j := |\widehat{\tm}_j| / \sum_{i=1}^m |\widehat{\tm}_i|\}$.
		\State Pick $t_{\ell}$ uniformly at random from $\widehat{\tm}_{\ell}$.
		\State Simulate $\vv':=\Phi_{q'}(1-t_{\ell},\vv^{\ell})$, where $q'$ is the target of $g_{\ell}$.
		\State Set $q_k := q'$, $\vv_k := \vv'$, and extend $\tau := (q_0,\vv_0)\ldots(q_k,\vv_k)$.
		\EndFor
		\State \textbf{return} $\tau$
	\end{algorithmic}
\end{algorithm}
\vspace{-0.2in}

 We now show that the trajectory generation algorithm (Algorithm \ref{algtrajsim}) generates a trajectory in $TRJ^{K + 1}$ whose induced paths in $M$ are being sampled according to the underlying probabilities.
According to Algorithm \ref{algtrajsim}, the probability of picking guard $g_j$ for a trajectory starting at $\vv\in X$ is defined as $|\widehat{\tm}_j| / \sum_{i=1}^m |\widehat{\tm}_i|$, which, by the law of large numbers tends to
\begin{equation}\label{eqn:probguard}
	p_j(\vv) := \frac{\mu(\tm_j(\vv))}{\sum_{i=1}^m \mu(\tm_i(\vv))}
\end{equation}
as $J$ tends to $\infty$.

Now if $\vv$ is randomly sampled according to $\textbf{P}_X$, then the probability of picking guard $j$  can be expressed as the expected value of $p_j(\vv)$ under $\vv\sim \textbf{P}_X$ as
\begin{equation}
	\expect_{\vv\sim \textbf{P}_X}[p_j(\vv)] = \int_{\vv\in X} p_j(\vv)d\textbf{P}_X = \int_{\vv\in X} \frac{\mu(\tm_j(\vv))}{\sum_{i=1}^m \mu(\tm_i(\vv))}d\textbf{P}_X,
\end{equation}
which by \eqref{eqnprob} is equal to $p_j$, the corresponding transition probability in the Markov chain.
%\end{proof}

% \propdist*
\begin{proof}
	Clearly it suffices to show that for a measurable subset $Y\subseteq X_j$, $\prob(\vv'\in Y)=\textbf{P}_{X_j}(Y)$. We start with
	\begin{equation*}
		\prob(\vv'\in Y\ |\ \vv) = \int_{t\in \tm_{j}(\vv)}{\bf 1}_{(\flow_{q_j}(1-t,\flow_q(t,\vv))\cap Y)}d\Ptm.
	\end{equation*}
	Integrating now over all possible choices of $\vv$ with respect to $\textbf{P}_X$ we have
	\begin{equation*}
		\prob(\vv'\in Y) = \int_{\vv\in X}\prob(\vv'\in Y\ |\ \vv)d\textbf{P}_X.
	\end{equation*}
	From \eqref{eqn:measure} it follows that  $\prob(\vv'\in Y) = \textbf{P}_{X_j}(Y)$ with $\vv\sim \textbf{P}_X$ and $t\sim \Ptm$.
\end{proof}

Whether the generated trajectory of length $K + 1$ (and hence the corresponding path of $M$) is a model of  $\psi$ can be determined using a standard BLTL model checker \cite{clarke1999model}. In fact this can be done on the fly which will often  avoid generating the whole trajectory. Based on this, we can test whether  $\prob_{\ge 1}(\psi)$ on $M$  by testing the following alternative pair of hypotheses:
$H_0:  \prob_{\ge 1}(\psi)$ and $H_1:  \prob_{<1-\delta}(\psi)$,
where $0 < \delta < 1$ is a parameter chosen by the user marking the interval $[1-\delta,1)$ as an indifference region in which accepting either hypothesis is fine. In our setting, whenever we encounter a sample (i.e. a randomly generated trajectory) that does not satisfy $\psi$, we can reject $H_0$ and accept $H_1$. Thus we only have to deal with false positives (when $H_0$ is accepted while $H_1$ happens to be true).

  This leads to Algorithm \ref{alghyptest} that repeatedly generates a random trajectory (using Algorithm \ref{algtrajsim}), and decides after a finite number of tries  between $H_0$ and $H_1$. For doing so we also fix a user-defined false positive rate $\alpha$.
  \vspace{-0.2in}
\begin{algorithm}[H]
\scriptsize
\renewcommand{\thealgorithm}{2}
\caption{Sequential hypothesis test}
\label{alghyptest}
Input: Markov chain $M$, BLTL property $\psi$, indifference parameter $\delta$, false positive bound $\alpha$.

Output: $H_0$ or $H_1$.
	\begin{algorithmic}[1]
		\State Set $N := \lceil \log \alpha/\log (1-\delta) \rceil$
		\For{$i:=1 \ldots N$}
			\State Generate a random trajectory $\tau$ using Algorithm \ref{algtrajsim}
			\State \textbf{if} {$\tau, 0 \models^H \psi$} \textbf{ then } Continue
			\State \textbf{else return} $H_1$
		\EndFor
		\State \textbf{return} $H_0$
	\end{algorithmic}
\end{algorithm}
\vspace{-0.3in}
The accuracy of Algorithm \ref{alghyptest} is captured by the next result.

\begin{restatable}[]{theorem}{thmaccuracy}
  \label{thm:smc} The probability of choosing $H_1$ when $H_0$ is true (false negative) is $0$. Further, suppose $N \ge \log\alpha/\log (1-\delta)$. Then the probability of choosing $H_0$ when $H_1$ is true (false positive) is no more than $\alpha$.
  \end{restatable}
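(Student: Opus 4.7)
The plan is to separate the two claims and handle each in isolation, leaning on the trajectory-sampling correctness already established in the discussion preceding the theorem statement (namely that a single call to Algorithm~\ref{algtrajsim} draws a path of $\mc$ whose measure coincides with the Markov chain measure on $\paths^{K+1}_M$), together with the correspondence result Theorem~\ref{thm:main}.

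For the false negative claim, I would assume $H_0$, i.e.\ $\prob_{\ge 1}(\psi)$. By Theorem~\ref{thm:main} this is equivalent to $H \models \psi$, so every $\tau \in TRJ^{K+1}$ satisfies $\tau, 0 \models_H \psi$. Since Algorithm~\ref{algtrajsim} always outputs a trajectory in $TRJ^{K+1}$, every iteration of the loop in Algorithm~\ref{alghyptest} passes the test, and so the algorithm cannot return $H_1$. Equivalently, the probability that any given sample violates $\psi$ equals $1 - \prob(models_\mc(\psi)) = 0$, and a union bound over the $N$ samples yields the claim.

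For the false positive claim, I would assume $H_1$, i.e.\ $\prob(models_\mc(\psi)) < 1-\delta$. By the sampling correctness, each call to Algorithm~\ref{algtrajsim} produces, independently, a trajectory whose induced path lies in $models_\mc(\psi)$ with probability $\prob(models_\mc(\psi)) < 1-\delta$. Algorithm~\ref{alghyptest} returns $H_0$ only when all $N$ of its samples satisfy $\psi$, an event whose probability is at most $(1-\delta)^N$ by independence. Choosing $N \ge \log\alpha / \log(1-\delta)$ and using $\log(1-\delta) < 0$ gives $N \log(1-\delta) \le \log \alpha$, hence $(1-\delta)^N \le \alpha$, which is the desired bound.

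The only delicate point is invoking independence of the samples and the exact-equality of the single-sample acceptance probability with $\prob(models_\mc(\psi))$; both are consequences of the sampling analysis already carried out in this section (Equation~\eqref{eqn:probguard} and the computation showing $\prob(\vv'\in Y)=\textbf{P}_{X_j}(Y)$), combined with the first part of Lemma~\ref{lemmapbltl} to transfer the BLTL verdict from trajectories to the compatible Markov chain paths. With these in hand, no further measure-theoretic work is needed and the proof reduces to the two bookkeeping arguments above.
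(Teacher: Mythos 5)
Your proof is correct and follows essentially the same route as the paper: the false-negative claim is immediate because under $H_0$ every trajectory satisfies $\psi$ so the algorithm can never reject, and the false-positive bound is the same $(1-\delta)^N \le \alpha$ computation using independence of the $N$ samples and the sampling-correctness analysis preceding the theorem. Your write-up is somewhat more explicit than the paper's about where the single-sample acceptance probability and the independence come from, but the argument is the same.
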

\begin{proof}
As observed earlier the first part is obvious. To prove the second part, if $H_1$ is true, then we know that $\prob_{<1-\delta}(\psi)$. The probability of $N$ sampled trajectories all satisfying $\psi$ (and thus returning $H_0$, a false positive) is at most $(1-\delta)^N$. Therefore we have $\alpha \le (1-\delta)^N$, leading to  $N \ge \log\alpha/\log (1-\delta)$. \qed
\end{proof}

Hence we use $N := \lceil \log\alpha / \log (1-\delta) \rceil$ to set the sample size. For example for $\delta = 0.01$ and $\alpha = 0.01$ we get $N = 459$ while for $\delta = 0.001$ and $\alpha = 0.01$ we get $N = 4603$.

\section{Quantitative specifications}\label{quant}
To specify quantitative properties we fix a finite set of atomic propositions $AP_{qt}$ of the form $\langle x_i < c \rangle$ or $\langle x_i > c \rangle$ where $c$ is a rational constant. In what follows we shall assume for convenience that all the atomic propositions that we encounter are members of $AP_{qt}$. It will be straightforward to extend our arguments to include qualitative atomic propositions as well.

We partition $\Re^n$ into hypercubes according to the constants mentioned in the quantitative atomic propositions in $AP_{qt}$. (Actually one could just focus on the members of $AP_{qt}$ that appear in a given specification but we wish to deal with specifications later). Accordingly, define $C_i$ to be the set of rational constants so that $c \in C_i$ iff an atomic proposition of the form $\langle x_i < c \rangle$ or $\langle x_i > c \rangle$ appears in $AP_{qt}$. We next define for each dimension $i$ the set of intervals
\[ \ints_i = \{(-\infty,c^1_i), \{c_i^1\}, (c_i^1,c_i^2), \{c_i^2\}, \ldots (c_i^m,+\infty)\}\]
where $C_i = \{c_i^1 < c_i^2 < \ldots <  c_i^m \}$. In case $C_i = \emptyset$ we set $\ints_i = \{(-\infty, +\infty)\}$.

This leads to the set of hypercubes $\hyper$ given by $\hyper = \{\prod_iI_i\ |\ I_i\in \ints_i\}$.
Clearly $\hyper$ is a partition of $\Re^n$. The states of the Markov chain $\mc_{qt}$ we wish to define as the approximation of $H$ will be the states of $\mc$ defined previously but now refined using $\hyper$. More precisely  we define $\mc_{qt}= (\states_{qt}, \edges_{qt})$ inductively as follows: $\epsilon \in \states_{qt}$ and it is the initial state of $\mc_{qt}$. Every other state in $\states_{qt}$ will be of the form $(\path, X,\hcube,\mathbf{P}_{X})$ where $\path$ is a path in H, $X$ is an open subset of $\reals^n$ of finite non-zero measure, $\hcube\in \hyper$ and $\mathbf{P}_{X}$ is a probability distribution over $X$. Furthermore $X \subseteq \hcube$.

\subsection{The two semantics}
For interpreting $BLTL$ formulas over $\mc_{qt}$ it will be convenient to assume the following syntax in which negation is immediately followed by a quantitative atomic proposition:
\[A \, | \, \lnot A \, | \, \varphi_1 \lor \varphi_2 \, | \, \varphi_1 \wedge \varphi_2 \, | \, G^{\leq k} \varphi \, | \, F^{\leq k} \varphi \, | \, \varphi_1 \U^{\leq k} \varphi_2.\]

Clearly, every BLTL formula can be transformed into an equivalent formula that has the above syntax. This can be achieved by pushing negation inwards using equivalences such as $\lnot(\varphi_1 \lor \varphi_2) \equiv \lnot \varphi_1 \wedge \lnot \varphi_2$, $\lnot G^{\leq k} \varphi \equiv F^{\leq k} \lnot \varphi$, $\lnot (\varphi_1 \U^{\leq k}$ $\varphi_2) \equiv G^{\leq k} \lnot \varphi_2 \lor (\lnot \varphi_2 U^{\leq k} (\lnot \varphi_1 \wedge \lnot \varphi_2))$ etc.

The trajectory semantics is defined along previous lines but the atomic propositions are handled as follows. Let $\tau = (q_0, \vv_0)$ $(q_1, \vv_1)$ $\ldots$  $(q_k, \vv_k)$ be a finite trajectory and $0\leq \ell \leq k$. Then $\tau, \ell \models_{H, qt} \langle x_i < c \rangle$ iff
$\vv_{\ell}(i) < c$. On the other hand $\tau, \ell \models_{H, qt} \lnot \langle x_i < c \rangle$ iff $\tau, \ell \not \models_H \langle x_i < c \rangle$. The clauses for the other cases are defined in the obvious way. As before $\tau$ is a (trajectory) model of $\psi$ iff $\tau \in TRJ^{K+1}$ and $\tau, 0 \models_{H, qt} \psi$.

To interpret BLTL formulas over $\mc_{qt}$, let $\pi= \eta_0 \eta_1 \ldots \eta_k$ be a path in $\mc_{qt}$ with $\eta_0=\epsilon$ and $\eta_{\ell} = (\rho q_{\ell}, X_{\ell}, \hcube_{\ell}, \mathbf{P}_{X_{\ell}})$ for $0 < \ell \leq k$.
Let $\psi$ be a BLTL formula and $0 < \ell \leq k$.  Then  $\pi, \ell \models_{qt} \psi$ is given by:
\begin{itemize}
\item $\pi,  \ell \models_{qt}  \langle x_i < c \rangle$ iff there exists $\vv \in X_{\ell}$ such that $\vv(i) < c$.
\item $\pi,  \ell \models_{qt}  \lnot \langle x_i < c \rangle$ iff there exists $\vv \in X_{\ell}$ such that $\vv(i) \geq c$.
\item The remaining clauses are defined in the obvious way.
\end{itemize}
For $\vv \in \Re^n$ let $\vv \models A$ denote the fact that $\vv(i) < c$ in case $A = \langle x_i < c \rangle$ and $\vv(i) > c$ in case $A = \langle x_i > c \rangle$. Next suppose $(\rho, X, \hcube, \mathbf{P}_{X})$ is a state of $\mc_{qt}$ and $A \in AP_{qt}$. Then $X \subseteq \hcube$ by construction. Furthermore it is easy to check that $\vv \models A$ for every $\vv \in \hcube$ or $\vv \not\models A$ for every $\vv \in \hcube$. Thus the semantics defined above will be consistent in the sense it will be the case that either $\pi,  \ell \models_{qt} A$ or $\pi, \ell \models_{qt} \lnot A$ but not both.

Let $\mathcal{B}$ be the set of paths of length $K + 2$ that start from the initial state of $\mc_{qt}$. Now we define $models_{qt}(\psi) \subseteq \mathcal{B}$ via: $\pi \in models_{qt}(\psi)$ iff $\pi, 1 \models_{qt} \psi$.
We can now define the probability of satisfaction of a formula in $\mc_{qt}$. Let $\pi = \eta_0 \eta_1 \ldots \eta_{K+1}\in \mathcal{B}$. Then $\Pr(\pi) = \prod_{0 \leq \ell < K} p_{\ell}$,
where $\eta_{\ell} \stackrel{p_{\ell}}{\Rightarrow} \eta_{\ell + 1}$ for $0 \leq \ell < K+1$. This leads to
\begin{equation*}
\Pr(models_{qt}(\psi)) = \sum_{\pi \in models_{qt}(\psi)} \Pr(\pi).
\end{equation*}

We let $\mc_{qt} \models \psi$ denote the fact $\Pr(models_{qt}(\psi)) = 1$.

\subsection{The correspondence result}
We shall relate the behavior of $H$ to that $\mc_{qt}$ using the notion of \emph{robust} trajectories.  To start with, for $\vv \in \Re^{n}$ we let $hc(\vv)$ be the hypercube $\hcube$ in $\mathcal{H}$ such that $\vv \in \hcube$. Since $\mathcal{H}$ is a partition of $\Re^n$ we have that $hc(\vv)$ exists and is unique. In what follows we let $\ell$ range over $\{0, 1, \ldots, K\}$. We now define the equivalence relation $\approx \subseteq TRJ^{K+1}$ as follows: Let $\tau, \tau' \in TRJ^{K+1}$ with $\tau(\ell) = (q_{\ell}, \vv_{\ell})$ and $\tau'(\ell) = (q'_{\ell}, \vv'_{\ell})$. Then $\tau \approx \tau'$ iff $q_{\ell} = q'_{\ell}$ and  $hc(\vv_{\ell}) = hc(\vv'_{\ell})$ for each $\ell$. We let $[\tau]$ denote the $\approx$-equivalence class containing $\tau$.

Next suppose $\tau \in TRJ^{K+1}$ with $\tau(\ell) = (q_{\ell}, \vv_{\ell})$. Let $\mathcal{Q}(\tau,\ell)=q_\ell$ and $\mathcal{V}(\tau, \ell) = \vv_{\ell}$. Define $[\tau](\ell) = \{\mathcal{V}(\tau', \ell) \mid \tau' \in [\tau]\}$. It is easy to verify that $[\tau](\ell)$ is a measurable set (but perhaps with measure $0$) for each $\ell$.
% Has \mathcal{V} been defined?

The trajectory $\tau \in TRJ^{K+1}$ is said to be \emph{robust} iff $\mu([\tau](\ell)) > 0$ for every $\ell$. We will say that $H$ robustly satisfies the specification $\psi$-and this is denoted by $H \models_R \psi$ iff $\tau, 0 \models_H \psi$ for every robust trajectory $\tau$ in $TRJ^{K+1}$. It is now straightforward to show (along the lines of the proof of \ref{thm:main}) show:

\begin{restatable}[]{theorem}{thmqt}
\label{thmqt}
$H \models_R \psi$ iff $\mc_{qt} \models \psi$.
\end{restatable}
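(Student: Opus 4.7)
The plan is to mirror the proof of Theorem~\ref{thm:main}, replacing Lemma~\ref{lemmapbltl} by a quantitative analog built around a suitable notion of compatibility. Call a path $\pi=\epsilon\,\eta_0\eta_1\ldots\eta_K$ in $\mc_{qt}$, with $\eta_\ell=(q_0q_1\ldots q_\ell,X_\ell,\hcube_\ell,\mathbf{P}_{X_\ell})$, \emph{compatible} with a trajectory $\tau=(q_0,\vv_0)(q_1,\vv_1)\ldots(q_K,\vv_K)\in TRJ^{K+1}$ iff $\vv_\ell\in X_\ell$ for every $\ell$. I would then prove three claims: (a) compatible $\pi$ and $\tau$ satisfy exactly the same BLTL formulas; (b) every length-$(K+2)$ path $\pi$ in $\mc_{qt}$ admits a compatible $\tau\in TRJ^{K+1}$; and (c) every \emph{robust} $\tau\in TRJ^{K+1}$ admits a compatible length-$(K+2)$ path $\pi$ in $\mc_{qt}$.

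For (a), the only genuinely new case is an atomic proposition $A\in AP_{qt}$: since $X_\ell\subseteq\hcube_\ell$ and the partition $\hyper$ is built so that every hypercube agrees uniformly on each atom, one has $\pi,\ell+1\models_{qt}A$ iff $\vv_\ell\models A$ iff $\tau,\ell\models_{H,qt}A$, and analogously for $\lnot A$; the compound cases go through by structural induction exactly as in Lemma~\ref{lemmapbltl}(1). For (b), the backward induction of Lemma~\ref{lemmapbltl}(2) transfers essentially verbatim: pick any $\vv_K\in X_K$ (possible since $\mu(X_K)>0$) and at each step use surjectivity of the flow maps and openness of the guards to produce $\vv_{\ell-1}\in X_{\ell-1}$ with the required flow/switch relation. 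For (c), I would follow the openness argument of Lemma~\ref{lemmapbltl}(3) but additionally intersect the reachable neighborhood with $hc(\vv_\ell)$ at each step; the robustness hypothesis $\mu([\tau](\ell))>0$ is precisely what guarantees this intersection still has positive measure, so a refined state of the form $(q_0\ldots q_\ell,X_\ell,hc(\vv_\ell),\mathbf{P}_{X_\ell})$ with $\vv_\ell\in X_\ell$ is produced by the inductive construction of $\mc_{qt}$.

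With the lemma in hand, both directions follow the schema of Theorem~\ref{thm:main}. If $H\not\models_R\psi$, pick a robust $\tau$ with $\tau,0\not\models_{H,qt}\psi$, apply (c) to obtain a compatible $\pi$, and apply (a) to get $\pi,1\not\models_{qt}\psi$; since $\pi$ has positive probability, $\Pr(\mathit{models}_{qt}(\psi))<1$ and hence $\mc_{qt}\not\models\psi$. Conversely, if $\mc_{qt}\not\models\psi$, pick such a violating $\pi$, apply (b) and (a) to obtain a trajectory $\tau\in TRJ^{K+1}$ with $\tau,0\not\models_{H,qt}\psi$.

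The main obstacle is verifying that this $\tau$ is actually robust, so that it witnesses $H\not\models_R\psi$. I would establish this by a backward induction on $\ell$: let $X'_\ell\subseteq X_\ell$ be the points that can be extended forward through $\eta_{\ell+1},\ldots,\eta_K$ along the prescribed hypercube sequence. Then $X'_K=X_K$, and if $\mu(X'_{\ell+1})>0$, Fubini applied to the Lipschitz-bijective flow maps ensures that the preimage of $X'_{\ell+1}$ under the $\mc_{qt}$ transition map has positive measure in $X_\ell$, so $\mu(X'_\ell)>0$. Combining forward extension with the backward construction of (b), every $\vv\in X'_\ell$ appears as $\mathcal{V}(\tau',\ell)$ for some $\tau'\in[\tau]$, so $X'_\ell\subseteq[\tau](\ell)$ and hence $\mu([\tau](\ell))>0$, establishing robustness.
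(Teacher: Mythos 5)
Your overall architecture --- a three-part compatibility lemma feeding the schema of Theorem~\ref{thm:main} --- is exactly the paper's (Lemma~\ref{lem:properties} followed by Theorem~\ref{thm:qs}), and your parts (a) and (c) match the paper's arguments. The gap is in the step you yourself flag as the main obstacle: showing that the trajectory produced in (b) is robust. Your claim that ``Fubini applied to the Lipschitz-bijective flow maps ensures that the preimage of $X'_{\ell+1}$ under the transition map has positive measure in $X_\ell$'' is not sound as stated. The transition map $(\vv,t)\mapsto\Phi_{q_{\ell+1},1-t}(\Phi_{q_\ell,t}(\vv))$ sends $\mathbb{R}^{n}\times(0,1)$ to $\mathbb{R}^n$, i.e.\ it collapses a dimension, and for such maps Lipschitz continuity gives no control in the direction you need: Lipschitz maps from $\mathbb{R}^{n+1}$ to $\mathbb{R}^n$ can send null sets onto positive-measure sets. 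Already for $n=1$ with $\dot x=1$ in mode $q_\ell$ and $\dot x=0$ in mode $q_{\ell+1}$, a single initial point $\vv_0$ sweeps out the whole interval $(\vv_0,\vv_0+1)$ of successors as $t$ ranges over $(0,1)$; so a positive-measure set of forward-extendable successors is in general compatible with a null set of forward-extendable predecessors, and no Fubini argument applied only to the flow maps rules this out.

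The missing ingredient is topological rather than measure-theoretic: for each reachable target value state $\ww\in X_{\ell+1}$ the set of admissible switching times $\{t\in(0,1)\,:\,\Phi^{-1}_{q_{\ell+1},1-t}(\ww)\in g,\ \Phi^{-1}_{q_\ell,t}(\Phi^{-1}_{q_{\ell+1},1-t}(\ww))\in X_\ell\}$ is open and non-empty, hence of positive one-dimensional measure; only with this fiberwise openness in hand does a Fubini argument (applied to the genuinely bi-Lipschitz map $(t,\vv)\mapsto(t,\Phi_{q_{\ell+1},1-t}(\Phi_{q_\ell,t}(\vv)))$) yield $\mu(X'_\ell)>0$. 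The paper avoids this detour entirely: Lemma~\ref{lem:trajclose} characterizes robustness by the existence of open sets $O_j$ of non-zero measure with $\vv_j\in O_j\subseteq[\traj](j)\subseteq\hcube_j$, and the proof of Lemma~\ref{lem:properties}(2) constructs the compatible trajectory by propagating such open, non-empty (hence positive-measure) sets backward through the inverse flows, so that robustness is built into the construction rather than verified a posteriori. You should either adopt that open-set construction or supply the openness-of-time-fibers argument explicitly; as written, the Fubini step is the one place your proof would not go through.
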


First the following properties of  the  Markov chain $M_{qt}$ can easily be proved along the lines of the proof of Theorem \ref{thm:MCprop}.

\begin{lemma}\label{lem:MCqt}
\begin{enumerate}
\item $X_j^\hcube$ is open and is of finite measure for each $j$ and each $\hcube\in \hyper$.
\item If $(\rho q_j, X_j^\hcube,\hcube, \textbf{P}_{X^\hcube_j}) \in \Upsilon_{qt}$ then $\mu(X_j^\hcube) > 0$.
\item $\textbf{P}_{X^\hcube_j}$ is a probability distribution for each $j$ and each $\hcube\in \hyper$.
\item $M_{qt} = (\Upsilon_{qt}, \Rightarrow_{qt})$ is an infinite state Markov chain whose underlying graph is a finitely
branching tree.
\end{enumerate}
\end{lemma}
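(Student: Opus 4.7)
The plan is to derive each clause by lifting the unrefined construction of Section~\ref{sec:construction} through the partition $\hyper$. Given a parent state $(\rho, X, \hcube, \mathbf{P}_X)\in \states_{qt}$ and an outgoing guarded transition $q\move{g_j}q_j$, I would first form the pair $(X_j,\mathbf{P}_{X_j})$ exactly as in Section~\ref{sec:construction}, then for each $\hcube'\in\hyper$ set $X_j^{\hcube'}:=X_j\cap \hcube'$ and take $\mathbf{P}_{X^{\hcube'}_j}$ to be $\mathbf{P}_{X_j}$ conditioned on $X_j^{\hcube'}$ whenever $\mu(X_j^{\hcube'})>0$; the candidate is discarded otherwise. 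With this setup, each of the four clauses reduces to its counterpart in Theorem~\ref{thm:MCprop} together with elementary facts about $\hyper$.

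For clause~(1), hypercubes in $\hyper$ split into two kinds: the \emph{full-dimensional} cubes, which are products of open intervals and are thus themselves open subsets of $\Re^n$, and the \emph{lower-dimensional} cubes containing at least one singleton coordinate, which have Lebesgue measure zero. For a full-dimensional $\hcube'$, $X_j^{\hcube'}$ is the intersection of two open sets and is therefore open, while $\mu(X_j^{\hcube'})\le \mu(X_j)<\infty$ by Theorem~\ref{thm:MCprop}. For a lower-dimensional $\hcube'$, $\mu(X_j^{\hcube'})=0$ automatically, so the triple is pruned and clause~(2) is enforced by the construction, exactly as in the unrefined case.

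For clause~(3), $\mathbf{P}_{X^{\hcube'}_j}$ is a probability measure on $SA(X_j^{\hcube'})$ because conditioning $\mathbf{P}_{X_j}$ on an event of positive measure preserves total mass $1$ and countable additivity. The transition probability from $(\rho, X, \hcube, \mathbf{P}_X)$ into $(\rho q_j, X_j^{\hcube'}, \hcube', \mathbf{P}_{X^{\hcube'}_j})$ is then the joint probability of choosing $g_j$ (via equation~\eqref{eqnprob}) and of the resulting value state lying in $\hcube'$ under $\mathbf{P}_{X_j}$; these sum to $1$ by the law of total probability, since $\hyper$ partitions $\Re^n$ and the per-guard probabilities $p_j$ already sum to $1$.

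For clause~(4), $\states_{qt}$ is infinite because for every $K$ the prefixes of length $K$ in $\paths_H$ induce distinct states. Each state has finite out-degree, bounded by the number of outgoing mode transitions times $|\hyper|$, which is finite because $AP_{qt}$ is finite. Tree-ness follows because each non-initial state records its entire prefix $\rho$, so distinct paths in $\mc_{qt}$ give distinct state sequences with $\epsilon$ as the unique root. The main technical care needed is the compatibility of the $\hyper$-refinement with the flow-induced measure on $X_j$; once one observes that each $\hcube'\in \hyper$ is Borel and that measure-zero strata are excluded by clause~(2), the remaining work is essentially the bookkeeping already carried out for Theorem~\ref{thm:MCprop}.
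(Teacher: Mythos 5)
Your proposal is correct and follows essentially the route the paper intends: the paper gives no explicit proof of this lemma, merely asserting that it follows "along the lines of the proof of Theorem~\ref{thm:MCprop}", and your argument is exactly that reduction, with the refinement $X_j^{\hcube'} = X_j \cap \hcube'$ and the conditioned measure made explicit. The only detail worth flagging is that clause~(1) as literally stated quantifies over all $\hcube\in\hyper$, including lower-dimensional strata where $X_j\cap\hcube'$ need not be open; your resolution --- that these are excluded from $\states_{qt}$ because they have measure zero --- is the reading consistent with the paper's definition of the states of $\mc_{qt}$.
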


We wish to show that for quantitative specifications, $H$ robustly satisfies a BLTL specification $\psi$ if and only if $M_{qt}$ satisfies $\psi$ with probability 1. We begin with:

\begin{lemma}\label{lem:trajclose}
Let $\traj = (q_0,\vv_0),(q_1,\vv_1),\ldots (q_{K},\vv_{K}) \in TRJ^{K+1}$. Then the following statements are equivalent.
\begin{enumerate}
\item $\traj$ is robust.
\item There exist open sets of non-zero measure $O_j$ and $\hcube_j \in \hyper$ such that $\vv_j \in O_j \subseteq [\trj][j] \subseteq \hcube_j$ for $0 \leq j \leq K$.
\item $\vv_j(i) \notin C_i$ for every $j \in \{0, 1, \leq K\}$ and every $i \in \{1, 2, \ldots, n\}$.
\end{enumerate}
\end{lemma}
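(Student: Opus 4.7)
The plan is to prove the cycle $(2) \Rightarrow (1) \Rightarrow (3) \Rightarrow (2)$, so that all three conditions become equivalent. The two easy directions are dispatched first, leaving the construction of the neighbourhoods $O_j$ from condition $(3)$ as the main work.

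For $(2) \Rightarrow (1)$ there is essentially nothing to do: each $O_j$ is open and of positive Lebesgue measure, and $O_j \subseteq [\traj](j)$, so $\mu([\traj](j)) \geq \mu(O_j) > 0$. For $(1) \Rightarrow (3)$ I would argue the contrapositive. If $\vv_j(i) = c$ for some $c \in C_i$, then by the construction of $\ints_i$ the $i$-th factor of $hc(\vv_j)$ is the singleton $\{c\}$, so $hc(\vv_j)$ has Lebesgue measure zero in $\Re^n$. By the definition of $\approx$, every $\traj' \in [\traj]$ satisfies $hc(\mathcal{V}(\traj',j)) = hc(\vv_j)$, hence $[\traj](j) \subseteq hc(\vv_j)$ and $\mu([\traj](j)) = 0$, contradicting $(1)$.

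The substantive direction is $(3) \Rightarrow (2)$. Let $t_0,\ldots,t_{K-1} \in (0,1)$ be the transition times witnessing $\traj$, so $\Phi_{q_j,t_j}(\vv_j) \in g_j$ and $\vv_{j+1} = \Phi_{q_{j+1},1-t_j}(\Phi_{q_j,t_j}(\vv_j))$, where $g_j$ is the guard of the transition $q_j \to q_{j+1}$. Define the composite one-step maps $\Psi_j(\vv) = \Phi_{q_{j+1},1-t_j}(\Phi_{q_j,t_j}(\vv))$; each $\Psi_j$ is a Lipschitz homeomorphism onto its image because the parametrized flow maps $\Phi_{q,t}$ and their inverses are bijective and Lipschitz. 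Under $(3)$, each $hc(\vv_j)$ is a product of \emph{open} intervals, hence open in $\Re^n$. I would then shrink a small open neighbourhood $U_0$ of $\vv_0$ so that the following hold simultaneously: (a) $U_0 \subseteq \INIT \cap hc(\vv_0)$; (b) $\Phi_{q_j,t_j}((\Psi_{j-1} \circ \cdots \circ \Psi_0)(U_0)) \subseteq g_j$ for every $0 \leq j < K$; and (c) $(\Psi_{j-1} \circ \cdots \circ \Psi_0)(U_0) \subseteq hc(\vv_j)$ for every $0 \leq j \leq K$. Each such condition specifies a preimage of an open set containing $\vv_0$ under a continuous map, so a finite intersection yields an admissible open $U_0$. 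Setting $O_0 = U_0$ and $O_j = (\Psi_{j-1} \circ \cdots \circ \Psi_0)(U_0)$, each $O_j$ is open (image of an open set under a homeomorphism) and satisfies $\vv_j \in O_j \subseteq hc(\vv_j)$. Moreover every $\vv'_0 \in U_0$ induces a genuine trajectory $\traj' \in TRJ^{K+1}$ by re-using the same transition times $t_j$, and $\traj' \approx \traj$ by (c), so $O_j \subseteq [\traj](j)$.

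The main obstacle is precisely this simultaneous shrinking step: I must verify that \emph{one} neighbourhood $U_0$ can be chosen to meet conditions (a)--(c) at every step of the trajectory, and that the resulting $O_j$ is genuinely contained in $[\traj](j)$ rather than in some larger image set. The argument depends on (Lipschitz) continuity of the flows, on openness of $\INIT$ and of every guard $g_j$, and crucially on the openness of every hypercube $hc(\vv_j)$ along the trajectory --- which is exactly what hypothesis $(3)$ secures and which would fail at any step where $\vv_j$ lay on a threshold $c \in C_i$.
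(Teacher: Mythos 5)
Your proof is correct and follows essentially the same route as the paper's: the easy implications rest on the same observation (a threshold hit forces $[\traj](j)\subseteq hc(\vv_j)$ with $\mu(hc(\vv_j))=0$), and for $(3)\Rightarrow(2)$ both arguments propagate a small open neighbourhood of $\vv_0$ along the trajectory using the facts that the flow maps are homeomorphisms and that every $hc(\vv_j)$ is open under hypothesis $(3)$. The only differences are organizational --- you close the cycle $(2)\Rightarrow(1)\Rightarrow(3)\Rightarrow(2)$ and shrink a single $U_0$ at time $0$ via a finite intersection of open preimages, whereas the paper proves the pairwise equivalences and builds $O_{j+1}$ forward inductively by intersecting with $\hcube_{j+1}$ at each step; your explicit condition (b), ensuring the whole neighbourhood actually crosses the guard $g_j$ at time $t_j$, makes precise a point the paper's induction leaves implicit.
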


\begin{proof}
In what follows we let $j$ range over $\{0, 1, \ldots, K\}$.
Suppose $\traj$ is robust. Let $hc(\vv_j) = \hcube_j$ for each $j$. By the definition of $\approx$, we have   $\vv_j \in [\traj](j) \subseteq \hcube_j$ for each $j$. Since $\mu([\traj](j)) > 0$ we have $\mu(\hcube_j) > 0$ for each $j$. This implies that $\hcube_j(i)$ is a finite  open interval for $1 \leq i \leq n$. But then $[\traj](j) \subseteq \hcube_j$ and $\mu([\traj](j)) > 0$ now together imply that there exists a non-empty open set $O_j$ of finite measure such that $\vv_j \in O_j \subseteq [\traj](j)$ for each $j$. Thus (1) implies (2).

 Next suppose part (2) of the lemma holds. Then $\mu([\traj](j)) > 0$ for each $j$. Thus $\traj$ is robust and we have (2) implies (1).

 To show that (2) implies (3) assume that $\vv_j(i) \in C_i$ for some $j$ and $i$. Then $\mu(hc(\vv_j)) = 0$ . We need to find $\hcube_j$ and an open set of non-zero measure such that $\vv_j \in O_j \subset [\traj](j) \subseteq \hcube_j$. This implies  $hc(\vv_j) = \hcube_j$. But then $\mu(\hcube_j) = 0$ implies there can not exist an open set $O_j$ of \emph{non-zero measure} satisfying $\vv_j \in O_j \subseteq \hcube_j$. Hence (2) can not hold and this shows (2) implies (3).

 Next suppose (3) holds. Let $\hcube_j = hc(\vv_j)$ for each $j$. Then (3) implies $\mu(\hcube_j) > 0$ for each $j$.
 Let $\traj^{(j)}$ be the $j$-length prefix of $\traj$ for each $j$.

 Since $\INIT$ is open $O_0 = \INIT \cap \hcube_0$ is open. It is non-empty since $\vv_0 \in O_0$ and hence has non-zero measure. Furthermore $[\traj^{(0)}](0) = O_0$. We now have $\vv_0 \in O_0 \subseteq [\traj^{(0)}](0) \subseteq \hcube_0$. Assume inductively $0 < j < K$ and for $0 \leq k \leq j$ there exist open sets $O_k$ of non-zero measure such that $\vv_k \in O_k \subseteq [\traj^{(j)}](k) \subseteq \hcube_k$.

 Since $\traj$ is a trajectory there exist $g_j$ and $t_j\in(0,1)$ such that $q_j\move{g_j}q_{j+1}$ and $\Phi_{q_j,t_j}(\vv_j)\in g_j$ and $\vv_{j+1}=\Phi_{q_{j+1},1-t_j}(\Phi_{q_{j},t_j}(\vv_j))$. Let $Y_j = [\traj^{(j)}](j)$ and $Y'_{j+1}=\bigcup_{\vv \in Y_j} \{\Phi_{q_{j+1}, 1- t}(\Phi_{q_j, t}(\vv)) | t \in\tm(\vv)\}$ where $\tm(\vv) = \{ t | \Phi_{q_j, t}(\vv) \in g\}$. Clearly  $[\traj^{(j+1)}](j+1) = Y'_{j+1} \cap \hcube_{j+1}$. Next define $O'_{j+1} = \Phi_{q_{j+1},1-t_j}(\Phi_{q_{j},t_j}(O_j))$. Since both $\Phi^{-1}_{q_j,1-t_j}$ and $\Phi^{-1}_{q_{j},t_j}$ are continuous bijections, $O'_{j+1}$ is an open set and $\vv_{j+1}\in O'_{j+1}$. Let $O_{j+1} = O'_{j+1} \cap \hcube_{j+1}$. Since $\vv_{j+1} \in \hcube_{j+1}$ and $\hcube_{j+1}$ is open we have $O_{j+1}$ is open and non-empty and hence with non-zero measure. Further $O_{j+1} \subseteq [\traj^{(j+1)}](j+1) \subseteq \hcube_{j+1}$. This establishes the induction hypothesis and hence (3) implies (2).
 \end{proof}

We define the notion of {\em compatibility} as before. Let $\pathmc = \state_0\state_1\ldots\state_{k}$ be a path in $\mc_{qt}$ with $\state_j = (q_0q_1\ldots q_{j-1},X_j^{\hcube_j},\hcube_j,\prob_{X_j^{\hcube_j}})$ for $0 < j\leq k$,\ and  $\state_0=\epsilon$. Let $\traj = (q'_1,\vv_1)(q'_2,\vv_1)\ldots (q'_{k'},\vv_{k'})$ be a trajectory. Then we say that $\pathmc$ and $\traj$ are {\it compatible} iff $k=k'$ and for  $1\leq j\leq k$, $q_j = q'_j$ and $\vv_j \in X_{j}^{\hcube_{j}}$. As it will turn out, if $\traj$ and $\pathmc$ are compatible then $\traj$ will be robust.

In what follows we shall assume that our BLTL specifications involve only quantitative atomic propositions in $AP_{qt}$ and the formulas obey the syntax in which negation is immediately followed by an atomic proposition. Further the semantic notions $\models_H$ and $\models_{\mc_{qt}}$ (abbreviated as $\models_{qt}$)  are defined in the expected way.

\begin{lemma}\label{lem:properties}
\begin{enumerate}
\item Suppose the trajectory \linebreak $\traj = (q_1, \vv_1) (q_2, \vv_1) \ldots(q_k, \vv_k) \in \TRJ$  and  the path $\pathmc = \state_0 \state_1 \ldots \state_{k}$  in $\mc_{qt}$ with $\state_0=\epsilon$ are compatible. Let  $\psi$ be a BLTL specification and $j \in \{1, \ldots k\}$. Then $\traj, j \models_H \psi$ iff $\pathmc, j \models_{qt} \psi$.
\item Suppose $\pathmc$ is a path in $\mc_{qt}$ starting from $\epsilon$. Then there exists a robust trajectory $\traj$ in $TRJ$  such that $\pathmc$ and $\traj$ are compatible.
\item Suppose $\traj$ is a robust trajectory in $TRJ$. Then there exists a path $\pathmc$ in $\mc_{qt}$ starting from $\epsilon$ such that $\traj$ and $\pathmc$ are compatible.
\end{enumerate}
\end{lemma}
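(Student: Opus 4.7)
The plan is to mirror the three-part structure of Lemma~\ref{lemmapbltl}, substituting robustness for mere openness wherever the quantitative atomic propositions demand it.

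For part (1), I would do structural induction on $\psi$. The only substantive base case is an atomic proposition $A = \langle x_i < c\rangle$ (and its negation). Compatibility gives $\vv_j \in X_j^{\hcube_j} \subseteq \hcube_j$, and Lemma~\ref{lem:MCqt} forces $\mu(\hcube_j) > 0$, so each coordinate interval of $\hcube_j$ lies strictly between consecutive constants of $C_i$ (or is unbounded) and in particular avoids $C_i$. Hence the truth value of $A$ is constant on $\hcube_j$: $\traj, j \models_H A$ iff $\vv_j(i) < c$ iff some (equivalently, every) $\vv \in X_j^{\hcube_j}$ has $\vv(i) < c$, which is exactly $\pathmc, j \models_{qt} A$. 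The $\lnot A$ clause is analogous, using $\vv(i) \neq c$ throughout $\hcube_j$ to identify $\vv(i) \geq c$ with $\vv(i) > c$. The connective and bounded-until steps go through verbatim once one observes that suffixes of compatible pairs remain compatible.

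For part (2), I would induct on $k$. For the base case, pick any $\vv_1 \in X_1^{\hcube_1}$. For the inductive step I would repeat the backward construction of Lemma~\ref{lemmapbltl}(2), using the definition of $X_k^{\hcube_k}$ as the $\hcube_k$-slice of the flow-guard-flow image of $X_{k-1}^{\hcube_{k-1}}$ to recover suitable $\vv_{k-1}$ and $t \in (0,1)$ that produce $\vv_k$. Robustness of the resulting trajectory is automatic: every $\vv_\ell$ lies in $\hcube_\ell$, a positive-measure open hypercube whose coordinate intervals avoid $C_i$, so Lemma~\ref{lem:trajclose}(3)$\Rightarrow$(1) applies directly.

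For part (3), I would again induct on $k$. By Lemma~\ref{lem:trajclose} applied to the robust $\traj$, for each $\ell$ there is an open set $O_\ell$ of positive measure with $\vv_\ell \in O_\ell \subseteq [\traj](\ell) \subseteq \hcube_\ell$, where $\hcube_\ell = hc(\vv_\ell)$. In the base case $O_1 \subseteq \INIT \cap \hcube_1$ is open of positive measure and contains $\vv_1$, so the construction of $\mc_{qt}$ yields a state $\state_1 = (q_{in}, X_1^{\hcube_1}, \hcube_1, \prob_{X_1^{\hcube_1}})$ with $O_1 \subseteq X_1^{\hcube_1}$; take $\pathmc = \epsilon\state_1$. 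For the inductive step, given a compatible path for the $(k-1)$-prefix, I would push $O_{k-1}$ forward through the flow-guard-flow map near the witnessing $(\vv_{k-1}, t_{k-1})$---using bi-Lipschitz continuity of the two flow maps and openness of the guard---to obtain an open neighborhood of $\vv_k$ of positive measure. Because $\vv_k$ is interior to $\hcube_k$ by robustness, I may shrink this neighborhood to lie inside $\hcube_k$, hence inside the refined successor $X_k^{\hcube_k}$, which therefore appears in $\states_{qt}$ with a positive-probability edge from $\state_{k-1}$. The hardest step is precisely this part (3) extension: one must verify that the pushed-forward neighborhood actually lands in the intended refined hypercube $\hcube_k$ (secured by robustness) and intersects $X_k^{\hcube_k}$ in positive measure (secured by openness together with the Lipschitz bounds on the flow maps). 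The remaining machinery is essentially a transcription of Lemma~\ref{lemmapbltl}'s proof with hypercube refinement layered on top.
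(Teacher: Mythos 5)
Your proposal is correct and follows essentially the same route as the paper: constancy of atomic propositions over positive-measure hypercubes plus structural induction for part (1), a backward unrolling through the flow--guard--flow maps for part (2), and a forward pushforward of the open sets supplied by Lemma~\ref{lem:trajclose} for part (3). The only (harmless) deviation is in part (2), where you obtain robustness directly from condition (3) of Lemma~\ref{lem:trajclose} (each $\vv_\ell$ lies in a nondegenerate hypercube, hence avoids the constants), whereas the paper tracks open sets $O_j$ along the backward construction and invokes condition (2) instead.
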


\begin{proof}
\begin{enumerate}
\item From the definitions it follows that if $A \in AP_{qt}$ and $\hcube \in \hyper$ then $\vv \models A$ for every $\vv \in \hcube$ or $\vv \models \neg A$ for every $\vv \in \hcube$ but not both. Since $\vv_j \in \hcube_j$ we then have $\traj, j \models_H A$ iff $\pathmc, j \models_{qt} A$ and $\traj, j \models_H \neg A$ iff $\pathmc, j \models_{qt} \neg A$ for every atomic proposition. The remaining cases now follow easily by structural induction on $\psi$.

\item Let $\pathmc = \state_0 \state_1 \ldots \state_{k}$  in $\mc_{qt}$ with $\state_0=\epsilon$ and $\state_j = (q_0q_1\ldots q_{j-1},X_j^{\hcube_j},\hcube_j,\prob_{X_j^{\hcube_j}})$ for $0 < j\leq k$. For notational convenience we will write $X_j$ instead of $X_j^{\hcube_j}$.

    Since $\mu(X_k) > 0$ we can fix $\vv_k \in X_k$. Further $\hcube_k$ being a product of open intervals in $\Re$ with $X_k \subseteq \hcube_k$,  we can find an open set $O_k$ of non-zero measure such that $\vv_k \in O_k \subseteq X_k$. Thus we have $\vv_k \in O_k \subseteq X_k \subseteq \hcube_k$. From the construction of $\mc_{qt}$ it follows there exists $q_{k-1} \move{g}q_k$ and $\tm(\vv)  \subseteq (0, 1)$ for each $\vv \in X_k$ such that $\Phi^{-1}_{q_k, 1-t}(\vv) \in g$ for every $t \in \tm(\vv)$. Let $Y_{k-1}= \bigcup_{\vv\in X_k} \{\Phi^{-1}_{q_{k-1}, t}(\Phi^{-1}_{q_k, 1-t}(\vv)) \ |\ t\in \tm(\vv)\}$. From the construction  of it follows that $Y_{k-1} \subseteq X_{k-1}$.

    Next let $O_{k-1} = \bigcup_{\vv\in O_k} \{\Phi^{-1}_{q_{k-1}, t}(\Phi^{-1}_{q_k, 1-t}(\vv)) \ | \ t \in \tm(\vv)\}$. Clearly $O_{k-1}$is an open set of non-zero measure with $O_{k-1} \subseteq Y_{k-1}$. Moreover we can fix $\vv_{k-1} \in O_{k-1}$ such that $\vv_{k-1} = \Phi^{-1}_{q_k, 1-t}(\vv_k)$ for some $t \in \tm(\vv_k)$. Continuing this way we can find $\vv_j, O_j, Y_j$ for $1 \leq j \leq k$  (with $Y_k = X_k$) such that $ \traj = (q_1, \vv_1) (q_2, \vv_2) \ldots (q_k, \vv_k)$ is a trajectory and $\vv_j \in O_j \subseteq Y_j \subseteq \hcube_j$ for $1 \leq j \leq k$. From the construction of $\mc_{qt}$ it follows that $Y_j = [\traj](j)$ for $1 \leq j \leq k$. From Lemma \ref{lem:trajclose} it follows that $\pathmc$ and $\traj$ are compatible. It is also clear due to Lemma \ref{lem:trajclose} that $\traj$ is robust.

    \item Suppose $\traj = (q_1, \vv_1) (q_2, \vv_1) \ldots(q_k, \vv_k) \in \TRJ$ is robust. Then by Lemma \ref{lem:trajclose} there exist open sets $O_j$ of non zero measure and $\hcube_j \in \hyper$ such that $\vv_j \in O_j \subseteq [\traj](j) \subseteq \hcube_j$ for $1 \leq j \leq k$. Let $\traj^{(j)}$ denote the $j$-length prefix of $\traj$ for $1 \leq j \leq k$. We now define $X_j = [\traj^{(j)}](j)$ for $1 \leq j \leq k$. Then using the construction of $\mc_{qt}$ it is easy to show that there exists distributions $Pr_j$ over $X_j$ such that $\pathmc = \epsilon \eta_1 \eta_2 \ldots \eta_k$ is a path in $\mc_{qt}$ with  $\eta_j = (q_j, X_j, \hcube_j, Pr_j)$ for $1 \leq j \leq k$ and that $\pathmc$ is compatible with $\traj$.
\end{enumerate}
\end{proof}

We can now prove Theorem \ref{thm:qs}.

\begin{theorem}\label{thm:qs}\hspace{0.2cm} $H\models_R\psi$ iff $M_{qt}\models \psi$.
\begin{proof}
Suppose $H \not\models_R\psi$. Then there exists $\traj \in TRJ$ such
that $\traj$ is robust and $\traj, 0 \not\models_H \psi$. By Lemma \ref{lem:properties}, there exists a path $\pathmc$ in $M_{qt}$
which is compatible with $\traj$. Hence again by Lemma \ref{lem:properties} we then have $\pi \notin models_{M_{qt}}(\psi)$ which leads to
$Pr_{< 1}(\psi)$.
Next suppose that $Pr_{< 1}(\psi)$. Then there exists a path $\pathmc$ in $M_{qt}$ such that $\pathmc, 1 \not\models_{M_{qt}} \psi$.
By Lemma \ref{lem:properties}, there exists a robust trajectory $\traj$ which is compatible with $\pathmc$ and $\traj,0\not\models_H \psi$. This implies $H\not\models_R\psi$.
\end{proof}
\end{theorem}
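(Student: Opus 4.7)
The plan is to prove both directions by contrapositive, letting Lemma~\ref{lem:properties} do the heavy lifting. That lemma supplies the three ingredients needed to transfer (non-)satisfaction between the hybrid system and the Markov chain: (i) a compatible pair of a robust trajectory and an $\mc_{qt}$-path satisfy exactly the same BLTL formulas at each time step; (ii) every path of $\mc_{qt}$ starting from $\epsilon$ admits a compatible robust trajectory in $TRJ^{K+1}$; and (iii) every robust trajectory in $TRJ^{K+1}$ admits a compatible path in $\mc_{qt}$.

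For the direction $\mc_{qt} \models \psi \Rightarrow H \models_R \psi$ I would argue by contrapositive. Assume $H \not\models_R \psi$, so there exists a robust $\traj \in TRJ^{K+1}$ with $\traj, 0 \not\models_H \psi$. Part~(iii) of Lemma~\ref{lem:properties} provides a path $\pathmc$ in $\mc_{qt}$ that is compatible with $\traj$, and part~(i) then yields $\pathmc \notin models_{qt}(\psi)$. Since by construction each transition of $\mc_{qt}$ carries strictly positive probability (its target state $(\rho q_j, X_j^{\hcube}, \hcube, \mathbf{P}_{X_j^{\hcube}})$ is in $\states_{qt}$ only when $\mu(X_j^{\hcube}) > 0$, cf. Lemma~\ref{lem:MCqt}), we get $\Pr(\pathmc) > 0$ and hence $\Pr(models_{qt}(\psi)) \le 1 - \Pr(\pathmc) < 1$, contradicting $\mc_{qt} \models \psi$.

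For the opposite direction, assume $\mc_{qt} \not\models \psi$, i.e.\ $\Pr(models_{qt}(\psi)) < 1$. Then some $\pathmc \in \mathcal{B}$ with $\Pr(\pathmc) > 0$ lies outside $models_{qt}(\psi)$. Part~(ii) of Lemma~\ref{lem:properties} yields a robust trajectory $\traj \in TRJ^{K+1}$ compatible with $\pathmc$, and part~(i) then gives $\traj, 0 \not\models_H \psi$, so $H \not\models_R \psi$.

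The conceptual subtlety, which is already absorbed into the supporting lemmas, is the \emph{robustness} side-condition. A trajectory of $H$ whose value state at some time step sits exactly on a hypercube boundary constitutes a measure-zero event and is invisible to $\mc_{qt}$, whose states always lie in open hypercubes of positive measure. Without restricting $H$ to its robust trajectories the forward direction would fail, since an isolated boundary-crossing counterexample in $H$ could coexist with $\mc_{qt} \models \psi$. Lemma~\ref{lem:trajclose}, combined with the compatibility-based construction in Lemma~\ref{lem:properties}, is what makes this side-condition line up precisely with the measure-theoretic structure of $\mc_{qt}$, and is the place where the real work of the result is done; the theorem itself is then a short bookkeeping argument on top of those lemmas.
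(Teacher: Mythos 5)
Your proposal is correct and follows essentially the same route as the paper: both directions are argued by contrapositive, using part (iii) (robust trajectory $\Rightarrow$ compatible path), part (ii) (path $\Rightarrow$ compatible robust trajectory), and part (i) (compatible pairs agree on all BLTL formulas) of Lemma~\ref{lem:properties}. Your added remark that every path of $\mc_{qt}$ carries strictly positive probability (so a single non-model forces $\Pr(models_{qt}(\psi)) < 1$, and conversely $\Pr_{<1}(\psi)$ yields a positive-probability non-model) makes explicit a step the paper leaves implicit, but the argument is the same.
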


Finally, we wish to show that the number of non-robust trajectories are negligible compared with the robust ones. Hence they do not contribute much towards the dynamics of $H$. For that we need the following lemma.

\begin{lemma}
Suppose $\traj = (q_0, \vv_0) (q_1, \vv_1) \ldots (q_k, \vv_k)$ is a non-robust
trajectory and $\traj^{(j)}$ is the $j$-length prefix of $\traj$ for $1 \leq j\leq k+1$. Let $\hcube_j = hc(\vv_j)$ and $Y_j = [\traj^{(j+1)}](j+1)$ for $0\leq j \leq k$. Then $Y_j$ is measurable and $Y_j \subseteq \hcube_j$ for $0 \leq j \leq k$. Furthermore $Y_j$ is of measure 0 for each $j$ in $\{0,1,..., k\}$.
\end{lemma}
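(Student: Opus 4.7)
The plan is an induction on $j$ that mirrors the inductive construction of $\mc_{qt}$, supplemented by the non-robustness hypothesis. Measurability of $Y_j$ and the inclusion $Y_j \subseteq \hcube_j$ follow along the lines of Theorem~\ref{thm:MCprop}: the base case is an intersection of Borel sets (the open $\INIT$, the open guards and the hypercube $\hcube_0$), and the inductive step pushes $Y_j$ forward through the Lipschitz flows $\Phi_{q_j, t}$ and $\Phi_{q_{j+1}, 1-t}$, intersects with the guard controlling the transition $q_j \move{g} q_{j+1}$, and finally intersects with $\hcube_{j+1}$. Each operation preserves Borel measurability, and the last intersection delivers the containment.

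For the null-measure claim I would first use Lemma~\ref{lem:trajclose}: since $\traj$ is non-robust there exist $j^* \in \{0, \ldots, k\}$ and $i^* \in \{1, \ldots, n\}$ with $\vv_{j^*}(i^*) = c^* \in C_{i^*}$, so $\hcube_{j^*}$ lies in the hyperplane $\{x \in \Re^n : x(i^*) = c^*\}$ and satisfies $\mu(\hcube_{j^*}) = 0$. Combined with $Y_{j^*} \subseteq \hcube_{j^*}$ this gives $\mu(Y_{j^*}) = 0$ immediately.

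To propagate the null-measure property to every other index $j$, I would exploit that $Y_j$ is tied to $Y_{j^*}$ through the Lipschitz flow maps: for any fixed sequence of intermediate transition times, the value at position $j^*$ is carried to the value at position $j$ by a composition of the form $\vv \mapsto \Phi_{q_\ell, 1-t_\ell}(\Phi_{q_{\ell-1}, t_{\ell-1}}(\vv))$ (or its inverse, for $j < j^*$), which is a Lipschitz bijection with Lipschitz inverse and therefore preserves null sets slice-wise. The main obstacle is that the transition times supply extra parameter dimensions, and a Lipschitz map from $\Re^{n+r}$ into $\Re^n$ can in general enlarge Lebesgue measure. I would resolve this by observing that $Y_{j^*}$ sits inside the codimension-one hyperplane $\{x(i^*) = c^*\}$, so the true source of the parametrized map has Hausdorff dimension at most $n - 1 + r$; a Fubini argument after slicing by the time parameters (or equivalently a coarea-type estimate) then shows that the image retains $n$-dimensional Lebesgue measure zero in $\Re^n$. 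Handling both the forward direction ($j > j^*$) and the backward direction ($j < j^*$) in this way yields $\mu(Y_j) = 0$ for every $j \in \{0, 1, \ldots, k\}$.
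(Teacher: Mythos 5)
Your first two steps coincide with the paper's proof: measurability and the containment $Y_j \subseteq \hcube_j$ come from pushing Borel sets through the (bi-Lipschitz, for fixed $t$) flows and intersecting with guards and cells, and the non-robustness hypothesis yields an index $j^*$ and coordinate $i^*$ with $\vv_{j^*}(i^*) = c^* \in C_{i^*}$, hence $\mu(\hcube_{j^*}) = 0$ and $\mu(Y_{j^*}) = 0$. The paper then propagates measure zero to the remaining indices exactly as you propose --- forward and backward through the flows --- by asserting that Lipschitz images of null sets are null, applied to the map $(t,\vv) \mapsto \Phi_{q_{j+1},1-t}(\Phi_{q_j,t}(\vv))$ from $(0,1)\times Y_{j^*} \subseteq \Re^{n+1}$ into $\Re^n$. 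You are right to single this out as the crux: that assertion is valid for Lipschitz maps between spaces of equal dimension, not for maps that drop a dimension.

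However, your proposed repair does not close the gap. Slicing by the time parameter only shows that each fixed-$t$ slice $\Phi_{q_{j+1},1-t}(\Phi_{q_j,t}(Y_{j^*}))$ is null; Fubini in $t$ then controls the measure of the graph in $\Re^{n+1}$, not of the union of the slices in $\Re^n$, and an uncountable union of null sets need not be null. The codimension-one observation cuts against you rather than for you: $(0,1)\times Y_{j^*}$ has Hausdorff dimension up to $(n-1)+1 = n$, and a Lipschitz image of an $n$-dimensional set in $\Re^n$ can perfectly well have positive Lebesgue measure, which is also what a coarea count predicts whenever the time-derivative of the composed flow is transverse to the image of the hyperplane. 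Concretely, take $n=1$, $C_1=\{c\}$, $Y_{j^*}=\{c\}$, with dynamics $dx/dt = 1$ in mode $q_{j^*}$ and $dx/dt = 2$ in mode $q_{j^*+1}$: then $\Phi_{q_{j^*+1},1-t}(\Phi_{q_{j^*},t}(c)) = c+2-t$ sweeps out an open interval of positive length as $t$ ranges over the open enabled set, so the forward image, and its intersection with the nondegenerate open cell $\hcube_{j^*+1}$, has positive measure. (Likewise the sets $Y_j$ for $j < j^*$ are built from the prefix alone, are not constrained by the later defect, and are typically open with positive measure.) So the ``for every $j$'' clause cannot be rescued along this route; what is provable, and all that the paper's subsequent argument actually uses, is the existence of a single index, namely $j^*$, with $\mu(Y_{j^*}) = 0$ --- which you already obtain in your second paragraph.
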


\begin{proof}
Since $\traj$ is not robust, there exists $j: 0\leq j\leq k$ such that $\vv_j(i)=c_i\in C_i$ for some $i$ and hence for all $\vv\in \hcube_j$, $\vv(i)=c_i$ which implies $\mu(\hcube_j)=0$.  We induct on $j$. For $j=0$, $Y_0 = \INIT \cap \hcube_0$ is measurable and has measure 0. Suppose $q_0\move{g}q_1$ and let $Y'_1 = \bigcup_{\vv\in Y_0}\{\Phi_{q_1,1-t}(\Phi_{q_0,t}(\vv))\ |\ t\in \tm(\vv)\}$ where $\tm(\vv) = \{t\ |\ \Phi_{q_0,t}(\vv)\in g\}$. Then $Y_1 = Y'_1\cap \hcube_1$. Let $\hat Y_1 = \Phi_{q_1}((0,1)\times \Phi_{q_0}((0,1)\times Y_0)\cap g)$. Since $\mu(Y_0)=0$ hence $\mu((0,1)\times Y_0)=0$. Now both $\Phi_{q_1}$ and $\Phi_{q_0}$ are Lipschitz, and hence $\mu(\hat Y_1)=0$ [since the image of a set of measure 0 has measure 0 under a Lipschitz function]. Now note that $Y_1\subseteq \hat Y_1$ and hence $Y_1$ must be measurable and $\mu(Y_1)=0$. Continuing this way, we can show that $Y_j$ is measurable for all $j: 2\leq j\leq k$ and $\mu(Y_j)=0$.

Next suppose $j>0$. By a similar argument we can show that $Y_\ell$ is measurable for all $j<\ell\leq k$ and $\mu(Y_\ell)=0$. Let $q_{j-1}\move{g}q_j$ and let $Y'_{j-1} = \bigcup_{\vv\in Y_j}\{\Phi^{-1}_{q_{j-1},1-t}(\Phi^{-1}_{q_j,t}(\vv))\ |\ t\in \tm(\vv)\}$ where $\tm(\vv) = \{t\ |\ \Phi_{q_{j-1},t}(\vv)\in g\}$. Then $Y_{j-1}= Y'_{j-1}\cap \hcube_{j-1}$. Let $\hat Y_{j-1} = \Phi_{q_{j-1}}((-1,0)\times \Phi_{q_j}((-1,0)\times Y_j)\cap g)$. Since $\mu(Y_j)=0$ hence $\mu((-1,0)\times Y_j)=0$. Now both $\Phi_{q_j}$ and $\Phi_{q_{j-1}}$ are Lipschitz, and hence $\mu(\hat Y_{j-1})=0$ [since the image of a set of measure 0 has measure 0 under a Lipschitz function]. Now note that $Y_{j-1}\subseteq \hat Y_{j-1}$ and hence $Y_{j-1}$ must be measurable and $\mu(Y_{j-1})=0$. Continuing this way, we can show that $Y_m$ is measurable for all $m: 0\leq m <  j$ and $\mu(Y_m)=0$. \qed
\end{proof}

Thus by the above lemma, if a trajectory $\traj\in TRJ^{K+1}$ is not robust then there exists a $j\in \{0,1,\ldots,K\}$ such that $\mu(Y_j)=0$. This implies that in the product topology of $Q^{K+1} \times \reals^{K+1}$, $[\traj]$ has measure 0. Thus, the contribution made by the non-robust trajectories to the dynamics of $H$ is negligible.

Thus in terms of the sub-dynamics consisting of robust trajectories there is again a strong relationship between the behaviors of $H$ and $\mc_{qt}$. It also turns out that in measure-theoretic terms the non-robust trajectories can be ignored. More precisely if one starts with the discrete topology over $Q^{K+1}$ and the usual topology over $\Re^{n^{K+1}}$  one can easily define a natural measure space over the product topology $Q^{K+1} \times \Re^{n^{K+1}}$. In this space for every non-robust trajectory $\tau$ the representation of $[\tau]$ will be measurable but with measure $0$. In this sense the contributions made by the non-robust trajectories to the dynamics of $H$ are negligible.

\subsection*{Trajectory simulation for quantitative specifications}
Algorithm 3 gives the procedure for simulating robust trajectories for the verification of quatitative BLTL specifications. By Lemma \ref{lem:trajclose} , a trajectory is robust iff it does not hit any of the constants mentioned in the atomic propositions. The procedure is the same as Algorithm 1 before, except that whenever a value state $\vv_k$ at any time step $k$ hits a constant mentioned in any of the atomic propositions, we discard $\vv_k$ and start the simulation again from the value state of the previous time step.

\begin{algorithm}%[H]
\scriptsize
\renewcommand{\thealgorithm}{3}
\caption{Robust trajectory simulation}
\label{algtrajsimqt}
Input: Hybrid automaton $H=(Q, q_{in}, \{F_q(\bx)\}_{q\in Q}, \mathcal{G}, \ra, \INIT)$, maximum time step $K$.

Output: Trajectory $\tau$
	\begin{algorithmic}[1]
		\State Sample $\vv_0$ from $\INIT$ uniformly. If $\vv_0(i)\in C_i$ for any $i$, repeat.
                \State Set $q_0:=q_{in}$ and $\tau:=(q_0,\vv_0)$.
		\For{$k:=1 \ldots K$}
                    \Repeat
		        \State Generate time points  $T:=\{t_1,\ldots,t_J\}$ uniformly in $(0,1)$.
			\State Simulate $\vv^{\ell} := \Phi_{q_{k-1}}(t_{\ell},\vv_{k-1})$, for $\ell\in\{1,\ldots,J\}$
			\State Let $\widehat{\tm}_j:=\{t \in T: \vv^{\ell} \in g_j\}$ be the time points where $g_j$ is enabled.
			\State Pick $g_{\ell}$ randomly according to probabilities $p_j := \frac{|\widehat{\tm}_{j}|}{\sum_{j'=1}^m | \widehat{\tm}_{j'}|}$.
			\State Pick $t_{\ell}$ uniformly at random from $\widehat{\tm}_{\ell}$.
			\State Simulate $\vv':=\Phi_{q'}(1-t_{\ell},\vv^{\ell})$, where $q'$ is the target of $g_{\ell}$.
                     \Until{$\vv'(i)\notin C_i$ for any $i$}
			\State Set $q_k := q'$, $\vv_k := \vv'$, and extend $\tau := (q_0,\vv_0)\ldots(q_k,\vv_k)$.
		\EndFor
	\State \textbf{return} $\tau$
	\end{algorithmic}
\end{algorithm}

To see that the algorithm terminates with probability 1, note that if $\vv_0\in\hcube$ and $\hcube(i) = \{c\}$ for some $c \in C_i$ then $\mu(\hcube) = 0$. Thus Step 1 repeats with probability 0. As a result with probability $1$ it will be repeated only a finite number of times. Similarly the repeat loop of Step 4-11 will terminate with probability 1.

\section{Case studies}\label{sec:results}
We first evaluated our method on a model of the electrical dynamics of the cardiac cell \cite{fenton08}. We also applied our method on a model of circadian rhythm network \cite{miyano06}. The $\Delta$ time step parameter for the cardiac cell model and the circadian rhythm model were both set to $0.1$. The parameters used for the statistical model checking were $\delta=0.01$ and $\alpha=0.01$. We have implemented our method using MATLAB. The source code is available at \url{http://github.com/bgyori/hybrid}. The experiments were carried out on a PC with a 3.4GHz Intel Core i7 processor with 8GB RAM. Simulating one trajectory took, on average, $5.2$s for the circadian clock model and $18.3$s for the cardiac cell model. We note that when checking quantitative properties, the trajectories that hit corner points such as $u = 1.4$ will be non-robust and hence can be ignored. Our implementation exploits the parallelization enabled by statistical model checking, hence multiple trajectories can be simulated simultaneously. A summary of the results for the verification of all properties for both models, along with the number of samples taken to complete the verification is given in Table 3 of the Appendix. 

In our experiments, we used $J=10$ as the number of intermediate time steps for choosing mode transitions. We investigated whether this choice is sufficient for accurate simulation. We simulated $1000$ independent realizations of the cardiac cell system with $J=10$ and $J=100$, and compared the distributions of the modes that the system is in at a series of discrete time points. The Kolmogorov-Smirnov statistical test did not reject the hypothesis that the two distributions are the same (at confidence level $95\%$). This indicates that using $J=10$ is adequate.

\subsection{Cardiac cell model}
Heart rhythm depends on the organized opening and closing of gates--called ion channels--on the cell membrane, which govern the electrical activity of cardiac cells.
Disordered electric wave propagation in heart muscle can cause cardiac abnormalities such as \textit{tachycardia} and \textit{fibrillation}. The dynamics of the electrical activity of a single human ventricular cell has been modeled as a hybrid automaton \cite{fenton08,grosu11} shown in Figure \ref{heart}. The model contains $4$ state variables and $26$ parameters. Ventricular cells consist of three subtypes, namely epicardial, endocardial, and midmyocardial cells, which possess different dynamical characteristics. The cell-type-specific parameters of the model are summarized in Table \ref{parameter} in the Appendix.
An action potential (AP) is a change in the cell's transmembrane potential $u$, as a response to an external stimulus (current) $\epsilon$.
The flow of total currents is controlled by a fast channel gate $v$ and two slow gates $w$ and $s$.

\begin{figure}[ht]
\centering
\includegraphics[width=0.9\textwidth]{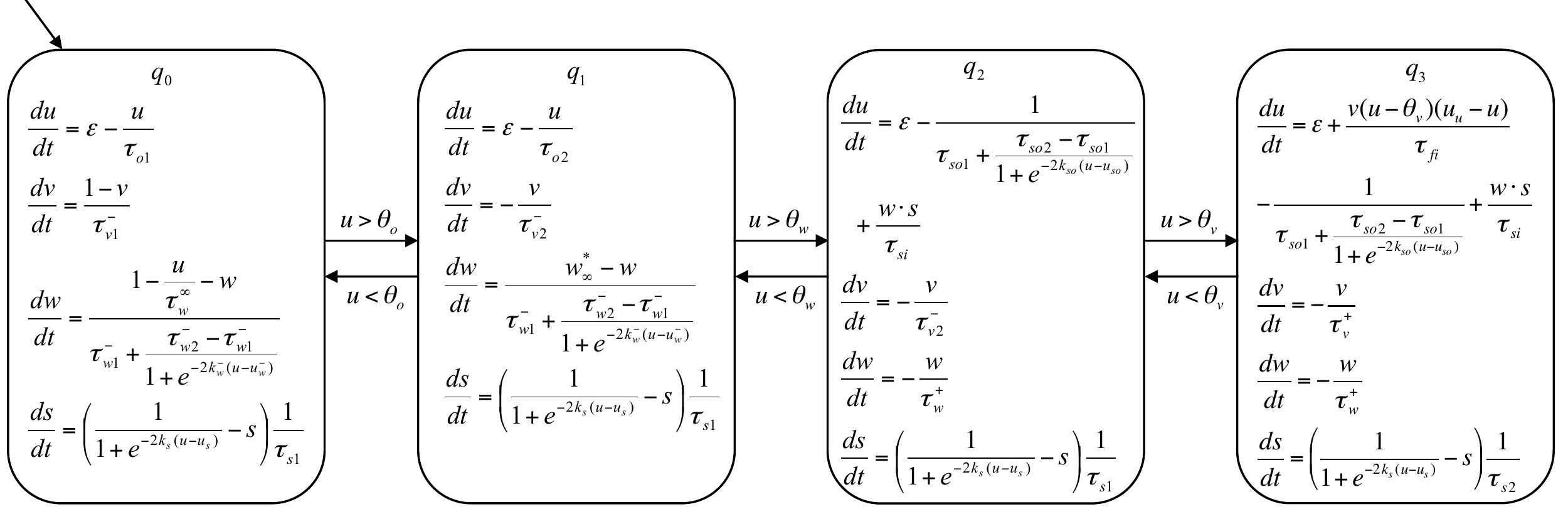}
\caption{The hybrid automaton model for the cardiac cell system \cite{grosu11}.}
\label{heart}
\end{figure}

In mode $q_0$, the ``Resting mode'', the cell is waiting for stimulation. We assume an external stimulus $\epsilon$ equal to $1$ mV lasting for $1$ millisecond. The stimulation causes $u$ to increase which may trigger a mode transition to mode $q_1$. In mode $q_1$, gate $v$ starts closing and the decay rate of $u$ changes. The system will jump to mode $q_2$ if $u > \theta_w$. In mode $q_2$, gate $w$ is also closing.
When $u > \theta_v$, mode $q_3$ can be reached, which means a successful ``AP initiation''. In mode $q_3$, $u$ reaches its peak due to the fast opening of a sodium channel. The cardiac muscle then contracts and $u$ starts decreasing.
%Figure \ref{cardiac-sim} shows a simulated trajectory for 3 periods of transient stimulation.

\noindent\textbf{Property C1}
It is known that the cardiac cell can lose its excitability, which will lead to disorders such as ventricular tachycardia and fibrillation. We formulated the property for responding to stimulus by leaving the resting mode:
\begin{center}
$\mathbf{F}^{\leq 500}(\neg[\text{Resting mode}])$.
\end{center}
The property was verified to be \emph{true} for all three cell types  under the healthy condition. However, under a disease condition (for example $\tau_{o1} = 0.004$ or  $\tau_{o2} = 0.1$ \cite{delta}) the property was verified to be \emph{false} no matter what stimulation value of $\epsilon$ was used. Consequently, a region of such unexcitable cells blocks the impulse conduction and can lead to cardiac disorders such as fibrillation. This is consistent with experimental results reported in \cite{tanaka07}.

\noindent\textbf{Property C2} After successfully generating an AP (that is, reaching the ``AP mode'', $q_3$), the cardiac cell should return to a low transmembrane potential and wait in ``Resting mode'' for the next stimulation. The corresponding formula is
\begin{center}
$\mathbf{F}^{\leq 500}([\text{AP mode}]) \wedge \mathbf{F}^{\leq 500}(\mathbf{G}^{\leq 100}([\text{Resting mode}]))$.
\end{center}
The above query was verified to be \emph{true} for all three cell types  under the healthy condition and transient stimulation. However, if we change the stimulation profile from transient to sustained, i.e. assuming $\epsilon$ lasts for $500$ milliseconds, the property was verified to be \emph{false}--the cell doesn't return to and settle at a low transmembrane potential resting state. In ventricular tissue the stimulus $\epsilon$ can be delivered from neighboring cells \cite{fenton08}. Thus, our results suggest that the transient activation of a single cardiac cell depends on the stimulation profile of its neighboring cells. 

\noindent\textbf{Property C3} It has been reported that epicardial, endocardial, and midmyocardial cells have different AP morphologies \cite{nabauer96,drouin95}. In particular, a crucial ``spike-and-dome'' AP morphology can only be observed in epicardial cells but not endocardial and midmyocardial cells (see Figure \ref{morphology} of the Appendix). 
We formulated the property for a spike-and-dome AP morphology as a quantitative property,
\begin{center}
$\mathbf{F}^{\leq 500}(\mathbf{G}^{\leq 1}([1.4 \leq u]) \wedge \mathbf{F}^{\leq 500}([0.8 \leq u] \wedge [u \leq 1.1] \wedge \mathbf{F}^{\leq 500}(\mathbf{G}^{\leq 50}([1.1 \leq u]))))$.
\end{center}
The property was verified to be \emph{true} for epicardial cell, and  \emph{false} for endocardial and midmyocardial cells, under the healthy condition and transient stimulation. Among $26$ model parameters, $20$ of them have different values over different cell types. We then perturbed each epicardial parameter and checked if the above property still holds. Our results show that $\tau_{s2}$ is key to the AP morphology (i.e. the spike-and-dome AP morphology disappears when $\tau_{s2}=2$), which highlights the importance of $s$ gate to epicardial cells. This is consistent with \cite{delta} that the model proposed in \cite{fenton98}, which does not includes $s$ gate, is unable to capture the dynamics of epicardial cells.

\subsection{Circadian rhythm model}
Mammalian cells follow a circadian rhythm with a 24h period, which is generated and governed by a highly coupled transcription-translation network. The model diagram and the corresponding hybrid system dynamics proposed in \cite{miyano06,miyano09} is shown in the Appendix. The system comprises 16 modes, each of which contains 12 state variables and 29 parameters. Each mode corresponds to a particular combination of ON or OFF transcriptional states of genes \emph{Per}, \emph{Cry}, \emph{Rev-Erb}, \emph{Clock}, and \emph{Bmal}. The switches between modes are guarded by the threshold levels of protein complexes PER-CRY, CLOCK-BMAL and REV-REB. The mRNA levels of \emph{Per} and \emph{Cry} are known to be oscillating due to the negative feedback loops in the network. Specifically, there are two major negative feedback (NF) loops: (i) the core NF formed by PER-CRY, CLOCK-BMAL, PER, and CRY and (ii) a complement NF formed by REV-ERB, BMAL, and CLOCK-BMAL. The time constants appearing in the properties are in minute units. 

\noindent\textbf{Property R1} Similar to \emph{Per} and \emph{Cry}, the expression level of \emph{Bmal} gene is also oscillating \cite{shearman00}. We formulated this property as
\begin{center}
$\mathbf{F}^{\leq 500}([1.5 \leq \text{\emph{Bmal}}] \wedge \mathbf{F}^{\leq 500}([\text{\emph{Bmal}} \leq 0.8] \wedge \mathbf{F}^{\leq 500}([1.5 \leq \text{\emph{Bmal}}] \wedge \mathbf{F}^{\leq 500}([\text{\emph{Bmal}} \leq 0.8] \wedge \mathbf{F}^{\leq 500}([1.5 \leq \text{\emph{Bmal}}])))))$
\end{center}
The property was verified to be \emph{true} under the wild type condition. It was verified to be \emph{false} under \emph{Cry} mutant condition but \emph{true} in the \emph{Rev-Erb} mutant condition, which is consistent with the experimental data in \cite{kim12,shearman00}. This suggests that the oscillatory behavior of \emph{Bmal} mRNA is induced by the core negative feedback mediated by PER-CRY, instead of the complement negative feedback mediated by REV-ERB.

\noindent\textbf{Property R2} It has been observed that the peaks of \emph{Bmal} mRNA are always located between two successive \emph{Per} or \emph{Cry} mRNA peaks \cite{kim12}. The corresponding formula is
\begin{center}
$\mathbf{F}^{\leq 500}([\text{\emph{Bmal}} \leq 0.8] \wedge [2.0 \leq \text{\emph{Per}}] \wedge [2.0 \leq \text{\emph{Cry}}] \wedge  \mathbf{F}^{\leq 500}([1.5 \leq \text{\emph{Bmal}}] \wedge [\text{\emph{Per}} \leq 0.8] \wedge [\text{\emph{Cry}} \leq 0.8] \wedge  \mathbf{F}^{\leq 500}([\text{\emph{Bmal}} \leq 0.8] \wedge [2.0 \leq \text{\emph{Per}}] \wedge [2.0 \leq \text{\emph{Cry}}] \wedge  \mathbf{F}^{\leq 500}([1.5 \leq \text{\emph{Bmal}}] \wedge [\text{\emph{Per}} \leq 0.8] \wedge [\text{\emph{Cry}} \leq 0.8]))))$
\end{center}
The above query was verified to be \emph{true} under wild type condition. If we remove the dependence between \emph{Bmal} transcription and PER-CRY concentration, the property R2 was verified to be \emph{false}, while the property R1 was verified to \emph{true} (i.e. oscillating). Thus, our results suggest that the complement negative feedback mediated by REV-ERB is responsible for maintaining the oscillatory behavior of \emph{Bmal} mRNA level while PER-CRY plays a role in delaying the \emph{Bmal} expression responses.
\vspace{-0.2cm}
\section{Conclusion}
We have presented an approximate  probabilistic verification method for analyzing the dynamics of a hybrid system $H$ in terms of a Markov chain $M$. For bounded time properties, we have shown a strong correspondence between the behaviors of $H$ and $M$. We have also extended this result to handle quantitative atomic propositions and shown a similar correspondence result for the sub-dynamics consisting of robust trajectories.
Thus the intractable verification problem for $H$ can be solved approximately using its Markov chain approximation. Accordingly, we have devised  a statistical model checking procedure to verify that $M$ almost certainly meets a BLTL specification and then applied this procedure to two examples to demonstrate the applicability of our approximation scheme. A hardware accelerated parallel implementation of the trajectory sampling procedure will considerably improve the performance and scalability of our method. Overall, we view our results as providing a mathematical basis for verifying if a hybrid system models satisfies a BLTL property with high probability. 

As an extension, one could consider more sophisticated  stochastic assumptions regarding the time points and value states at which the mode transitions take place. These assumptions will however have to be justified and motivated by the modeling problem at hand, especially in systems biology applications. 
Yet another valuable extension will be to study  a network of hybrid systems. This will enable us to model the cross talk, feed-forward and feed-back loops involving multiple signaling pathways.

\bibliographystyle{abbrv}
\bibliography{sigproc}

\clearpage
\section*{Appendix}

\subsection*{Case studies}
The equations governing the dynamics of the circadian clock model are given in Figure \ref{circ_model}. The equations contain rate constants, which are denoted $k_1$ to $k_28$, set according to \cite{miyano09}. The combination of ``mode indicator'' binary variables $\theta_{CB}$ to $\theta_{RE}$, $\theta_{PC1}$, $\theta_{PC2}$ and $\theta_{PC3}$ define the mode of the dynamics, and each mode is defined by a unique value combination of the mode indicators. These value combinations are listed in Table \ref{circ_modes}. The guards associated with a source and target mode are constructed as follows. Each mode indicator corresponds to a guard component, which is a threshold on a state variable. For instance, $\theta_{RE}$ has the corresponding guard component [REV-ERB]$<1.1$. The guard to a target mode is enabled if all the mode indicators that are on in the mode are enabled according to their respective guard components. Finally, a transition between a source and a target mode only exists if there is only one difference in the combination fo mode indicators. For instance, there is a transition from mode 1 to mode 2 but not from mode 1 to mode 9. The dynamics of the \emph{Clock} mRNA is governed externally.

\begin{figure}
	\centering
	\includegraphics[width=0.9\textwidth]{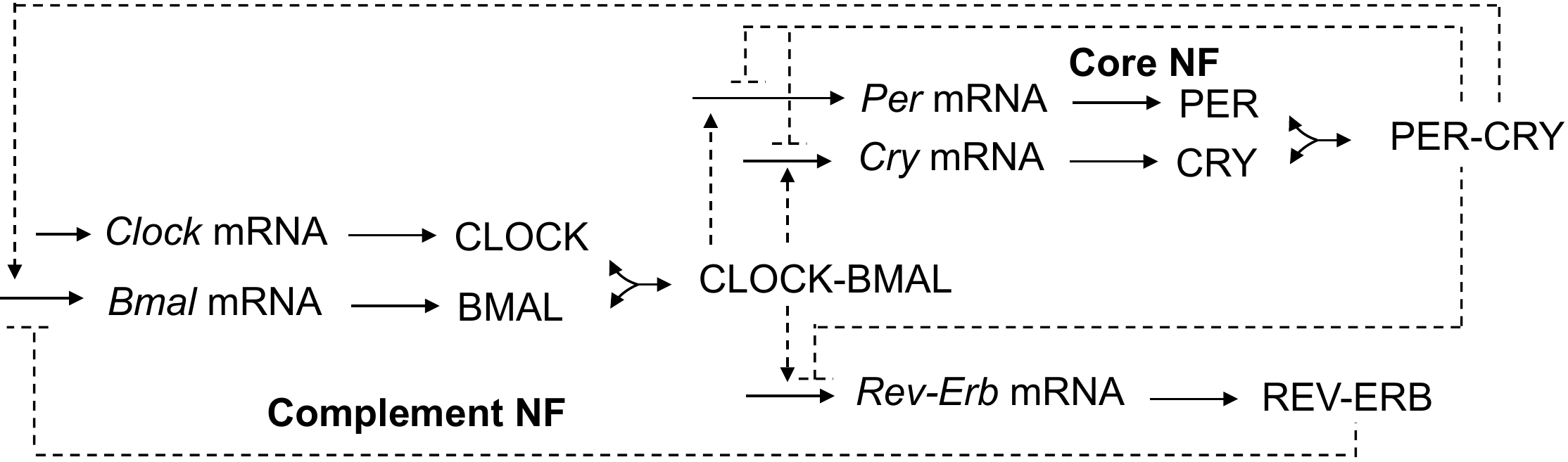}
	
	\vspace{10pt}
	
	\includegraphics[width=0.4\textwidth]{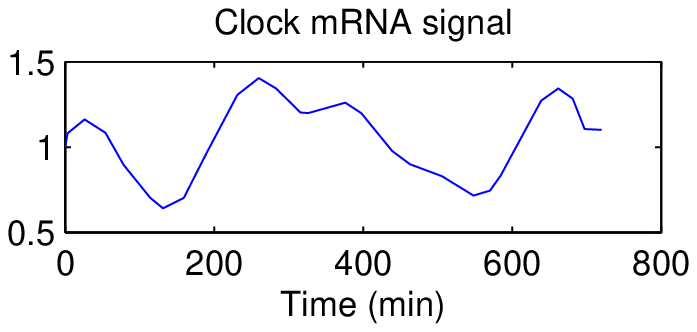}
	
	\vspace{-10pt}
	
	\begin{align*}\label{circ_ode}
		%per
		\text{d}/\text{d}t[\text{Per}] &= - k_{1}\cdot [\text{Per}] + k_{13}\cdot \theta_{PC2}\cdot \theta_{CB} + k_{14}\\
		% PER
		\text{d}/\text{d}t[\text{PER}] &= - k_{2}\cdot [\text{PER}] + k_{15}\cdot [\text{Per}] - k_{16}\cdot [\text{PER}]\cdot [\text{CRY}]\\
		% cry
		\text{d}/\text{d}t[\text{Cry}] &= - k_{3}\cdot [\text{Cry}] + k_{17}\cdot \theta_{PC2}\cdot \theta_{CB} + k_{18}\\
		% CRY
		\text{d}/\text{d}t[\text{CRY}] &= - k_{4}\cdot [\text{CRY}] + k_{19}\cdot [\text{Cry}] - k_{16}\cdot [\text{PER}]\cdot [\text{CRY}]\\
		% PER/CRY
		\text{d}/\text{d}t[\text{PER-CRY}] &= - k_{5}\cdot [\text{PER-CRY}] + k_{16}\cdot [\text{PER}]\cdot [\text{CRY}]\\
		% rev-erb
		\text{d}/\text{d}t[\text{Rev-Erb}] &= - k_{6}\cdot [\text{Rev-Erb}] + k_{20}\cdot \theta_{PC1}\cdot \theta_{CB} + k_{21}\\
		% REV-ERB
		\text{d}/\text{d}t[\text{REV-ERB}] &= - k_{7}\cdot [\text{REV-ERB}] + k_{22}\cdot [\text{Rev-Erb}]\\
		% CLOCK
		\text{d}/\text{d}t[\text{CLOCK}] &= - k_{9}\cdot [\text{CLOCK}] + k_{24}\cdot [\text{Clock}] - k_{25}\cdot [\text{CLOCK}]\cdot [\text{BMAL}]\\
		% bmal
		\text{d}/\text{d}t[\text{Bmal}] &= - k_{10}\cdot [\text{Bmal}] +  k_{26}\cdot \theta_{PC3}\cdot \theta_{RE} + k_{27}\\
		% BMAL
		\text{d}/\text{d}t[\text{BMAL}] &= - k_{11}\cdot [\text{BMAL}] + k_{28}\cdot [\text{Bmal}] - k_{25}\cdot [\text{CLOCK}]\cdot [\text{BMAL}]\\
		% CLOCK/BMAL
		\text{d}/\text{d}t[\text{CLOCK-BMAL}] &= - k_{12}\cdot [\text{CLOCK-BMAL}] + k_{25}\cdot [\text{CLOCK}]\cdot [\text{BMAL}]
	\end{align*}
	\caption{The model diagram, the Clock mRNA signal and the equations governing the circadian clock model.}
	\label{circ_model}
\end{figure}

\begin{table}
	\centering
	
	\begin{tabular}{c|c}
		Mode indicator & Guard component \\
		\hline
		$\theta_{RE}$ & [REV-ERB]$<1.1$ \\
		$\theta_{CB}$ & [CLOCK-BMAL]$>1.0$ \\
		$\theta_{PC1}$ & [PER-CRY]$<1.4$ \\
		$\theta_{PC2}$ & $1.4<$[PER-CRY]$<1.5$ \\
		$\theta_{PC3}$ & $2.2<$[PER-CRY] \\
	\end{tabular}
	
	\vspace{10pt}
	
	\begin{tabular}{c|cccc}
		Mode & 1 & 2 & 3 & 4  \\
		$(\theta_{PC1},\theta_{PC2},\theta_{PC3},\theta_{RE},\theta_{CB})$ &
		(1,1,0,1,0) & (1,1,0,1,1) & (1,1,0,0,0) & (1,1,0,0,1) \\
		\hline
		Mode & 5 & 6 & 7 & 8  \\
		$(\theta_{PC1},\theta_{PC2},\theta_{PC3},\theta_{RE},\theta_{CB})$ &
		(0,1,0,1,0) & (0,1,0,1,1) & (0,1,0,0,0) & (0,1,0,0,1) \\
		\hline
		Mode & 9 & 10 & 11 & 12 \\
		$(\theta_{PC1},\theta_{PC2},\theta_{PC3},\theta_{RE},\theta_{CB})$ &
		(0,0,0,1,0) & (0,0,0,1,1) & (0,0,0,0,0) & (0,0,0,0,1) \\
		\hline
		Mode & 13 & 14 & 15 & 16  \\
		$(\theta_{PC1},\theta_{PC2},\theta_{PC3},\theta_{RE},\theta_{CB})$ &
		(0,0,1,1,0) & (0,0,1,1,1) & (0,0,1,0,0) & (0,0,1,0,1)
	\end{tabular}
	
	\vspace{10pt}
	
	\caption{The $5$ mode indicator variables and their associated guard components (top). The 16 modes of the circadian clock model with the corresponding combination of binary mode indicator variables (bottom).}
	\label{circ_modes}
\end{table}

The parameters used for the cardiac cell model are given in Table \ref{parameter}.

\begin{table}[htb]
	\centering
	\small
	\begin{tabular}{|c|ccc|c|ccc|}
		\hline
		Parameter & EPI & ENDO & MID  & Parameter & EPI & ENDO & MID  \\ \hline
		$\theta_o$ & $0.006$ & $0.006$ & $0.006$ & $\tau_{v1}^-$ & $60$ & $75$ & $80$\\
		$\theta_w$ & $0.13$ & $0.13$& $0.13$ & $\tau_{v2}^-$ & $1150$ & $10$& $1.4506$ \\
		$\theta_v$ & $0.3$ & $0.3$& $0.3$ & $\tau_{w1}^-$ & $60$ & $6$& $70$ \\
		$u^-_{w}$ & $0.03$ & $0.016$& $0.016$ & $\tau_{w2}^-$ & $15$ & $140$& $8$ \\
		$u_{so}$ & $0.65$ & $0.65$& $0.6$ & $\tau_{o1}$ & $400$ & $470$& $410$\\
		$u_s$ & $0.9087$  & $0.9087$& $0.9087$ & $\tau_{o2}$ & $6$ & $6$& $7$ \\
		$u_u$ & $1.55$ & $1.56$& $1.61$ & $\tau_{so1}$ & $30.0181$ & $40$& $91$\\
		$w_{\infty}^{*}$ & $0.94$ & $0.78$& $0.5$ & $\tau_{so2}$ & $0.9957$ & $1.2$& $0.8$\\
		$k_w^-$ & $65$ & $200$& $200$ & $\tau_{s1}$ & $2.7342$ & $2.7342$& $2.7342$\\
		$k_{so}$ & $2.0458$ & $2$& $2.1$ & $\tau_{s2}$ & $16$ & $2$& $4$\\
		$k_s$ & $2.994$ & $2.994$& $2.994$ & $\tau_{fi}$ & $0.11$ & $0.1$& $0.078$\\
		$\tau_v^+$ & $1.4506$ & $1.4506$& $1.4506$ & $\tau_{si}$ & $1.8875$ & $2.9013$& $3.3849$\\
		$\tau_w^+$ & $200$ & $280$& $280$ & $\tau_{w\infty}$ & $0.07$ & $0.0273$& $0.01$\\
		\hline\end{tabular}
	\vspace{10pt}
	\caption{Parameter values of the cardiac model for epicardial (EPI), endocardial (ENDO), and midmyocardial (MID) cells under healthy condition.}
	\label{parameter}
\end{table}

\begin{figure}[ht]
	\centering
	\includegraphics[width=1.0\textwidth]{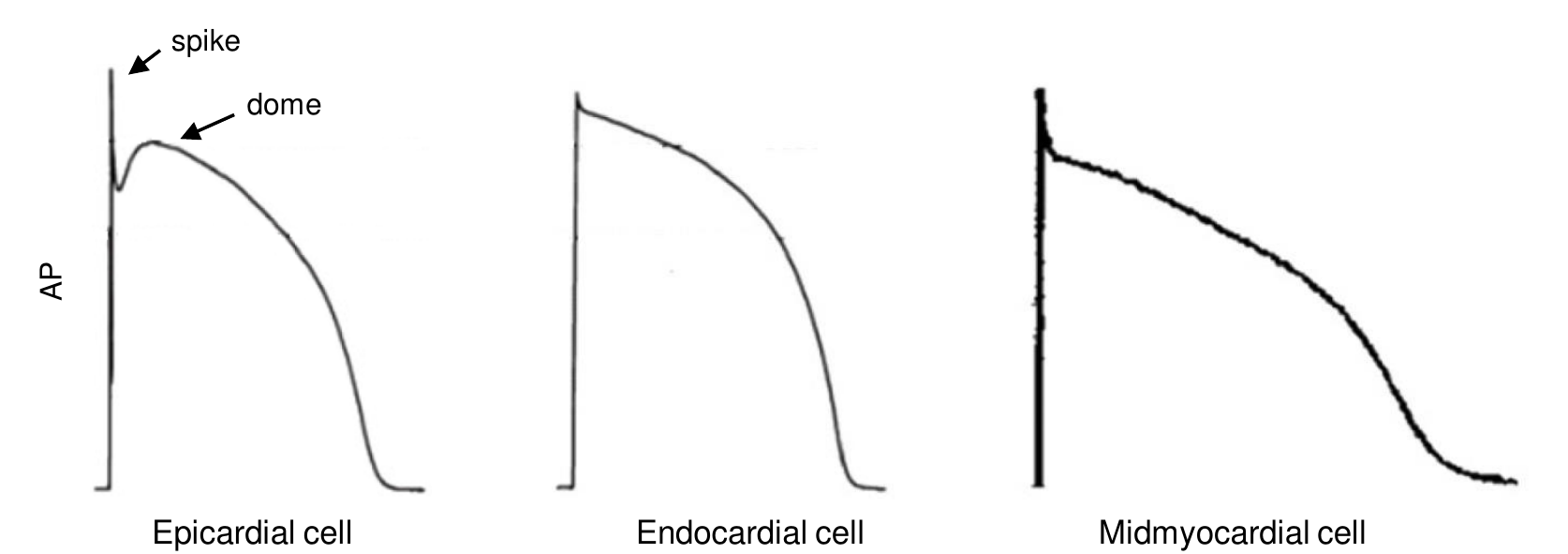}
	\caption{The AP morphologies of epicardial \cite{nabauer96}, endocardial \cite{nabauer96} and midmyocardial \cite{drouin95} cells.}
	\label{morphology}
\end{figure}

\begin{table}[htb]
	\centering
	\scriptsize
	\begin{tabular}{|c|c|c|p{1.5cm}|} \hline
		Property & Condition & Decision & \# samples before stopping \\ \hline
		C1 & Epicardial, Healthy & True & 459 \\
		C1 & Endocardial, Healthy & True & 459  \\
		C1 & Midmyocardial, Healthy & True & 459  \\
		C1 & Epicardial, Diseased & False & 1  \\
		C1 & Endocardial, Diseased & False & 1  \\
		C1 & Midmyocardial, Diseased & False & 1  \\
		C2 & Epicardial, Transient & True & 459  \\
		C2 & Endocardial, Transient & True & 459  \\
		C2 & Midmyocardial, Transient & True & 459  \\
		C2 & Epicardial, Sustained & False & 1  \\
		C2 & Endocardial, Sustained & False & 1  \\
		C2 & Midmyocardial, Sustained & False & 1  \\
		C3 & Epicardial, $\tau_{s2}=16$ & True & 459  \\
		C3 & Epicardial, $\tau_{s2}=2$ & False & 1  \\
		C3 & Endocardial & False & 1  \\
		C3 & Midmyocardial & False & 1  \\
		R1 & Wild type & True & 459  \\
		R1 & Cry mutant & False & 1  \\
		R1 & Rev-Erb mutant & True & 459  \\
		R2 & Wild type & True & 459   \\
		R2 & Without PER-CRY dependence  & False & 1  \\
		R1 & Without PER-CRY dependence  & True & 459  \\
		\hline\end{tabular}
	\vspace{10pt}
	\caption{Results summary of SMC for hybrid systems}\label{performance}.
\end{table}

\end{document}